\newtheorem{theorem}{Theorem}[section]
\newtheorem{proposition}[theorem]{Proposition}
\newtheorem{lemma}[theorem]{Lemma}
\newtheorem{corollary}[theorem]{Corollary}
\newtheorem{remark}[theorem]{Remark}
\newtheorem{definition}[theorem]{Definition}
\newtheorem{example}[theorem]{Example}
\newtheorem{assumption}[theorem]{Assumption}
\begin{document}

\title{Maximizing expected utility in the\\ Arbitrage Pricing Model}

\author{Mikl\'os R\'asonyi\thanks{Alfr\'ed R\'enyi Institute of
Mathematics, Hungarian Academy of Sciences, Budapest.}}

\date{\today}

\maketitle

\begin{abstract}
We consider an infinite dimensional optimization problem motivated by mathematical economics.
Within the celebrated ``Arbitrage Pricing Model'', we use probabilistic and functional analytic techniques
to show the existence of optimal strategies for investors who maximize their expected utility.
\end{abstract}

\noindent\textsl{MSC 2010 subject classification:} Primary: 91B16, 91G10; secondary: 46N10, 93E20\\
\textsl{Keywords:} utility maximization, large financial markets, arbitrage, optimal strategies, 
risk-neutral measures, infinite dimensional convex optimization 

\section{Introduction}

The present paper treats an infinite dimensional optimization problem coming from microeconomics. 
After sketching the economic theory context, we review the ensuing
developments in mathematical finance and explain certain related challenges in order to put our contribution in context. 
We then provide an overview of the paper.
 
A prevailing tenet of economic theory is that the expected return on a given asset in a financial market
should be an (approximately) linear function of its covariance with the market portfolio,
see e.g. \cite{huang}. In his groundbreaking paper \cite{ross}, S. A. Ross provided a new justification for this
principle, based on arbitrage considerations in a market model with a countably infinite number of assets.
The Arbitrage Pricing Model (APM for short) of \cite{ross} has been put on firm mathematical foundations
in \cite{huberman} and it has since become textbook material for courses on mathematical economics.

Inspired by APM, an abstract framework for markets with ``many'' assets 
has been proposed in \cite{kabanov-kramkov}. The theory of such ``large financial markets'' 
has been extensively investigated in 
\cite{klein-schachermayer1,klein-schachermayer2,kabanov-kramkov1,klein,klein-continuous,
dedonno,largo,def,baran,utility,orlicz,balbas,rokhlin,campi,dinunno}. There 
has also been a recent revival of interest in such models, see \cite{manu,josef}. In the context of
optimization, \cite{paolo} and \cite{oleksi} are the only previous studies we are aware of.

In simple terms, large financial markets are understood to comprise, for all $n\in\mathbb{N}$, a 
market model with $n$ assets and these finite ``market segments'' or ``small markets'' 
are nested into each other.
One may naturally define portfolio strategies on the market segments. However, the set of 
random variables corresponding to portfolio
values thus obtained fails to be closed in any reasonable topology. This is a handicap when studying problems
of optimization where a minimal requirement is that the domain of the objective function should be closed.
For such problems to make sense, a certain closure of the portfolio values needs to
be introduced in order to get a closed domain of optimization. It is desirable that  
every element of this domain has a natural interpretation
in terms of a portfolio involving (possibly) an infinite number of assets, see Remark \ref{palp} below
for details and for comments on previous work.

In the present paper we consider agents whose preferences are of the von Neumann-Morgenstern 
type (see \cite{von} or Chapter 2 of \cite{fs}). We assume that agents have an increasing, 
concave utility function and they aim at maximizing their expected utility from investment returns. 
For this optimization problem we will prove the existence of
an optimal portfolio, see Theorem \ref{bumbo} below. We exhibit a \emph{natural} class
of strategies which is closed (see Lemma \ref{closed}) and in which
an optimizer can be found.

The optimization problem we consider differs from the ones usually studied: it is infinite dimensional not only because the probability space is infinite
(as it is the case in most models of mathematical finance) but also the number
of assets (and hence the dimension of the portfolio) is infinite.
Our arguments rely on probabilistic and functional analytic techniques and exploit
previous results of \cite{largo,def} (which were also based on functional analytic arguments).
In the proof of Theorem \ref{bumbo} we introduce a new recursive procedure exploiting
the duality between portfolio values and equivalent risk-neutral measures. 

While the number of assets is infinite, in the APM there is only one time step. 
It would be desirable to extend our results to the setting of more general large financial markets with 
continuous trading. This is indeed
challenging because finding an appropriate class of admissible strategies seems difficult, see Remark \ref{palp}.

We describe the APM in Section \ref{modell}. In Section \ref{arbitrazs} we review various
arbitrage concepts and see how these are reflected by the model parameters. Section 
\ref{hasznos} presents our main result, Theorem \ref{bumbo}, on the existence of optimal strategies. 
In Corollary \ref{horhos} we also establish the existence of certain equivalent risk-neutral
measures (state price densities) for the APM using utility considerations. Finally, Section \ref{no}
investigates the problem of finding almost optimal strategies which invest into finitely many assets only.

\section{The model}\label{modell}


We fix a probability space $(\Omega,\mathcal{F},P)$. The expectation of a random variable $X$ with respect
to some measure $Q$ on $(\Omega,\mathcal{F})$ is denoted by $E_QX$. If $Q=P$ we drop the
subscript. $L^p(Q)$ is the set of
$p$-integrable random variables with respect to $Q$, for $p\geq 1$. 
If $Q=P$ we simply write $L^p$ instead of $L^p(Q)$ and use $L^{\infty}$ to denote the family of
of (essentially) bounded random variables (with respect to $P$). When $X\in L^2$
we define its variance as
$$
\mathrm{var}(X):=E(X-EX)^2.
$$
The symbol $\sim$ denotes
equivalence of measures, the notation $\mathcal{N}(a,b)$ refers to the Gaussian law
with mean $a$ and variance $b$. 

Following \cite{ross}, our model of a financial market consists of
a countably infinite number of assets/investments. The return on asset $i$ is
the random variable $R_i$, where 
\begin{eqnarray*}
R_0 &:=& r;\quad R_i:=\mu_i+\bar{\beta}_i\varepsilon_i,\quad 1\leq i\leq m;\\
R_i &:=& \mu_i+\sum_{j=1}^m \beta^j_i\varepsilon_j+\bar{\beta}_i
\varepsilon_i,\quad i>m.
\end{eqnarray*}

Asset $0$ is the riskless asset with a constant rate of return $r\in\mathbb{R}$. The random
variables $\varepsilon_i$, $i=1,\ldots,m$ serve
as \emph{factors} which influence the return on all the assets $i\geq 1$ while $\varepsilon_i,\ {i>m}$
are random sources particular to the individual assets $R_i$, $i>m$, which are 
responsible for the so-called ``idiosyncratic risk''. 

For simplicity we set $r=0$ from now on.
We assume that the $\varepsilon_i$ are square-integrable, independent random variables satisfying 
\begin{eqnarray*}
E\varepsilon_i=0,\quad E\varepsilon_i^2=1,\quad i\geq 1,
\end{eqnarray*}
thus the constant $\mu_i$ equals the expected return $ER_i$; the
$\beta^j_i$ and the $\bar{\beta}_i$ are real constants, we assume 
$\bar{\beta}_i\neq 0$, $i\geq 1$.

\begin{remark}
{\rm In \cite{ross,huberman} the $\varepsilon_i$ were assumed merely uncorrelated. We need the stronger
 independence assumption for our purposes. Also, as in \cite{kabanov-kramkov1}, we suppose that
 the ``economic indices'' $R_i$, $i=1,\ldots,m$ are traded assets in our model.
 
 It would be possible to weaken the latter hypothesis: it is enough to assume 
 $R_i=\sum_{j=1}^m a_{ij}(\mu_i+\bar{\beta}_i\varepsilon_i)$ with an invertible matrix 
 $[a_{ij}]_{i,j=1,\ldots,m}$, i.e. the first $m$ assets can be an affine
 transform of the economic factors $\varepsilon_i$, $i=1,\ldots m$. Such a model
 leads to the same family of portfolio values as the model we treat.
 
 The independence of the $\varepsilon_i$, $1\leq i\leq m$ could also be weakened at the price
 of more complicated arguments but we do not see any way to relax the independence
 assumption on the idiosyncratic risk components $\varepsilon_i$, $i>m$.
 We refrain from chasing greater generality here.}
\end{remark}

For any $k\geq 1$ we will call the {\em $\mathrm{k}$th market segment} 
the market model containing the assets
$R_0,\ldots, R_k$ only.
 
\begin{definition}\label{alb} A {\em portfolio} 
$\psi$ in the $\mathrm{k}$th market segment
is an arbitrary sequence $\psi_i,\ 0\leq i\leq k$ 
of real numbers satisfying 
\begin{equation}\label{lb}
\sum_{i=0}^k \psi_i=0.
\end{equation}
\end{definition}

\begin{remark} {\rm This definition can be justified as follows: for each $i\geq 1$ 
we may imagine the existence of an asset
which is worth $1$ dollar at time $0$ (today) and $1+R_i$ dollars at
a fixed future date (tomorrow). Then Definition \ref{alb} is just
the self-financing property of a portfolio in assets $i=0,\ldots,k$ with $0$ initial
capital, see e.g. Chapter 1 of \cite{fs}. One could easily incorporate a nonzero initial
capital as well.}
\end{remark}

For a portfolio $\psi$ in the $k$th market segment, its return is
defined as
$$
V({\psi}):=\sum_{i=0}^{k} \psi_i R_i.
$$
Since $R_0=0$, the constribution of $\psi_0$ does not matter so
\begin{equation}\label{janssons}
V({\psi})=\sum_{i=1}^{k} \psi_i R_i. 
\end{equation}

This means that a portfolio in the $k$th segment is characterized by an
arbitrary sequence $\psi_1,\ldots,\psi_k$ of real numbers which
determine $\psi_0$ (resp. $V(\psi)$) through \eqref{lb} (resp. \eqref{janssons}).

It is convenient to introduce the new parameters 
\begin{eqnarray*}
b_i &:=& -\frac{\mu_i}{\bar{\beta}_i},\quad 1\leq i\leq m;\\
b_i &:=& -\frac{\mu_i}{\bar{\beta}_i}+
\sum_{j=1}^m \frac{\mu_j\beta^j_i}{\bar{\beta}_j\bar{\beta_i}},\quad i>m.
\end{eqnarray*}
Asset returns take the following form:
\begin{eqnarray*}
R_i &=& \bar{\beta}_i(\varepsilon_i-b_i),\quad 1\leq i\leq m;\\
R_i &=& \sum_{j=1}^m\beta_i^j(\varepsilon_j-b_j)+\bar{\beta}_i(\varepsilon_i-
b_i),\quad i>m.
\end{eqnarray*}

It is clear that for any strategy $\psi$ in the $k$th market segment
$$
V(\psi)=\sum_{i=1}^k \phi_i (\varepsilon_i-b_i),
$$
for some real numbers $\phi_i$, $i=1,\ldots,k$ and there is a one-to-one correspondance
between $\phi_i$, $i=1,\ldots,k$ and
$\psi_i$, $i=1,\ldots,k$ (due to $\bar{\beta}_i\neq 0$). 

Let $\mathcal{E}$ denote the set of sequences $\phi=\{\phi_i,\ i\geq 1\}$ of real numbers 
such that, for some $k\geq 1$, $\phi_i=0$, $i\geq k$.
For each $\phi\in\mathcal{E}$ we set
$$
V(\phi):=\sum_{i=1}^{\infty} \phi_i (\varepsilon_i-b_i),
$$
where the sum is, in fact, finite. 

By what has been written above, this is an innocuos abuse of notation since 
the set of random variables which are values of portfolios in the $k$th market segment for some
$k$ equals $K:=\{V(\phi):\phi\in\mathcal{E}\}$. From now on we call the elements of $\mathcal{E}$
\emph{elementary strategies}.

As there is no hope for $K$ to be closed in any useful topology, we should extend the class
of strategies to certain portfolios that may contain infinitely many assets. A mathematically
natural choice is to define the set of \emph{admissible strategies} as $\mathcal{A}:=\ell_2$
where 
$$
\ell_2:=\{(\phi_i)_{i\geq 1}:\ \sum_{i=1}^{\infty}\phi_i^2<\infty\},
$$
that is, the family of square summable sequences which carries a Hilbert space structure
with the norm
$$
\Vert \phi\Vert_{\ell_2}:=\sqrt{\sum_{i=1}^{\infty}\phi_i^2},\quad \phi\in\ell_2.
$$

Assume $\sum_{i=1}^{\infty} b_i^2<\infty$ (we will see in Section \ref{arbitrazs} that
this parameter restriction necessarily holds in an arbitrage-free market).
The $\varepsilon_i$ are orthonormal in $L^2$ hence it is clear that, for any $\phi\in\mathcal{A}$, 
the series $\sum_{i=1}^{\infty}\phi_i(\varepsilon_i-b_i)$ converges in $L^2$ so we may define
$$
V(\phi):=\sum_{i=1}^{\infty}\phi_i(\varepsilon_i-b_i)
$$
as the value of an admissible portfolio $\phi\in\mathcal{A}$. (The series also converges almost
surely by Kolmogorov's theorem, see e.g. Chapter 4 of \cite{kallenberg}.)

\begin{remark} {\rm The economic interpretation of $\mathcal{A}$ is \emph{not} straightforward.
Let $\phi\in\mathcal{A}$ be fixed. 
Taking positions $\phi_1,\ldots,\phi_k$ in assets $1,\ldots,k$ (and zero position in the
remaining assets) forces us to take a position
$-\sum_{i=1}^k \phi_i$ in asset $0$ (in the riskless asset). Now, tending to $\infty$ with 
$k$, it may well happen
that e.g. $-\sum_{i=1}^k \phi_i$ tends to $-\infty$ (we can only guarantee that 
$\sum_{i=1}^k \phi_i^2$ stays bounded). This means that we allow certain portfolios that incur
``infinite debts''. It has to be stressed, however, that accumulating these debts still have certain constraints: namely, $\sum_{i=1}^{\infty} \phi_i^2<\infty$ must hold. Lemma 
\ref{closed} below shows that, economic considerations notwithstanding, the right choice
for the class of admissible strategies is $\mathcal{A}$, as intuition suggested.}
\end{remark}

\begin{remark} {\rm In continuous-time models of mathematical finance the set of admissible
strategies is often restricted, only those are admitted which have a value function bounded
from below (in our context, this would mean $V(\phi)\geq -d$ a.s. for some $d$). This is necessary
since using the full set of strategies would yield arbitrage in those models. 

In the present paper such a restriction is unnecessary (absence of arbitrage
holds under suitable conditions, see Theorem \ref{arbi} below) and it would lead to a trivial
domain of optimization (under Assumption \ref{relevant}, only $\phi=0$ satisfies $V(\phi)\geq -d$ a.s.,
as easily seen). Moreover, even in the case of continuous-time markets, optimizers of utility
maximization problems {\em do not} have a bounded from below value function and the
class of admissible strategies needs to be (carefully) enlarged so as to contain the optimizer,
see e.g. \cite{w}.}
\end{remark}

\section{Concepts of arbitrage}\label{arbitrazs}

The concept of arbitrage is fundamental in mathematical finance. Absence of arbitrage
normally goes together with the existence of risk-neutral measures (elements of the set $\mathcal{M}$
in our setting, see below). Risk-neutral measures not only provide fair pricing rules for
derivative products but they are also important technical tools in optimal investment problems.
See \cite{fs} for more on the interplay of arbitrage, risk-neutral measures and optimal
investment. In large financial markets various concepts of asymptotic arbitrage have
been introduced over the time. Here 
we first recall the original definition from \cite{ross}.

\begin{definition}
We say that {\em there is asymptotic arbitrage} if
there exists a sequence of portfolios $\psi(k)$
in the $\mathrm{k}$th market segment such that the corresponding portfolio returns
$V(\psi(k))$ satisfy
\begin{equation}\label{asym}
EV(\psi(k))\to\infty,\quad
\mathrm{var}(V({\psi(k)}))\to 0,
\end{equation}
as ${k}\to\infty$.
If there exist no such sequence we say that 
{\em there is no asymptotic arbitrage}.
\end{definition}

\begin{definition}\label{nafl}
We say that {\em there is an asymptotic free lunch} 
if there exists a sequence of trading strategies
$\phi(k)\in\mathcal{E}$ such that
$$
V(\phi(k))\to X,\ k\to\infty
$$
in probability where $X$ is a $[0,\infty]$-valued random variable
with $P(X>0)>0$. If no such sequence exists then we say that
{\em there is no asymptotic free lunch}.
\end{definition}

Absence of arbitrage admits diverse formulations and each of these is usually
equivalent to the existence of some dual variables corresponding to pricing functionals, see \cite{fs} about the classical
case of finitely many assets
and \cite{kabanov-kramkov,klein-schachermayer1,klein-schachermayer2,kabanov-kramkov1,klein,klein-continuous,
largo,def} about large financial markets. The next definition introduces the dual variables of the present paper.

\begin{definition}
We call a probability $Q\sim P$ an {\em equivalent risk-neutral measure}
for the given market 
if, for all $i\geq 1$, 
\begin{equation}\label{nulla}
E_Q R_i=0,
\end{equation}
that is, if the expected return under $Q$ on all assets equals the riskless return.
The set of equivalent risk-neutral measures is denoted by $\mathcal{M}$.
\end{definition}

It is clear that $Q\in\mathcal{M}$ iff $E_Q \varepsilon_i=b_i$ for all $i\geq 1$.
We need that \eqref{nulla} extends to all admissible portfolios for suitable $Q$ 
and $b_i$.

\begin{lemma}\label{varhato}
 Let $Q\in\mathcal{M}$ with $dQ/dP\in L^2$ and assume  $\sum_{i=1}^{\infty}b_i^2<\infty$.
 Then, for all $\phi\in\mathcal{A}$,
 $$
 E_QV(\phi)=0.
 $$
\end{lemma}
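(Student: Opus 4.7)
The plan is to reduce the claim to finite sums via an $L^2(P)$ convergence argument, using that $dQ/dP \in L^2(P)$ to dualize integration against $Q$ into the $L^2(P)$ inner product.

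First I would introduce the partial sums $V_k(\phi) := \sum_{i=1}^k \phi_i(\varepsilon_i - b_i)$ and record the identity
\[
E_Q Y = E\!\left[\frac{dQ}{dP}\, Y\right] = \left\langle \frac{dQ}{dP},\, Y\right\rangle_{L^2(P)}
\]
valid whenever $Y\in L^2(P)$. Since the $\varepsilon_i$ are orthonormal in $L^2(P)$ and the $b_i$ are constants with $\sum_i b_i^2<\infty$, the hypothesis $\phi\in\ell_2$ guarantees (as already noted in the paper) that $V_k(\phi)\to V(\phi)$ in $L^2(P)$.

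Next, using Cauchy--Schwarz, the continuity of the $L^2(P)$ inner product against the fixed element $dQ/dP\in L^2(P)$ yields
\[
E_Q V(\phi) \;=\; \left\langle \frac{dQ}{dP},\, V(\phi)\right\rangle_{L^2(P)} \;=\; \lim_{k\to\infty}\left\langle \frac{dQ}{dP},\, V_k(\phi)\right\rangle_{L^2(P)} \;=\; \lim_{k\to\infty} E_Q V_k(\phi).
\]
For each fixed $k$, $V_k(\phi)$ is a finite linear combination of the $\varepsilon_i-b_i$, and each such summand lies in $L^2(P)\subset L^1(Q)$ (again by Cauchy--Schwarz with $dQ/dP\in L^2(P)$), so linearity of $E_Q$ applies. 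By the characterization $Q\in\mathcal{M} \iff E_Q\varepsilon_i = b_i$ for all $i\geq 1$ noted just before the lemma, every term $E_Q(\varepsilon_i-b_i)=0$, and hence $E_Q V_k(\phi)=0$ for all $k$. Passing to the limit gives $E_Q V(\phi)=0$.

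There is no real obstacle here; the only point requiring a brief justification is the interchange of limit and $Q$-expectation, which is where the assumption $dQ/dP\in L^2(P)$ is essential: it converts $L^2(P)$-convergence of the partial sums into convergence of $Q$-expectations via Cauchy--Schwarz, without requiring any further integrability on the individual $\varepsilon_i$ beyond their square-integrability under $P$.
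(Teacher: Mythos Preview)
Your argument is correct and essentially the same as the paper's: both use Cauchy--Schwarz with $dQ/dP\in L^2(P)$ to pass from $L^2(P)$-convergence of the partial sums to convergence of their $Q$-expectations, then invoke $E_Q(\varepsilon_i-b_i)=0$. The only cosmetic difference is that the paper phrases this as showing the series is Cauchy in $L^1(Q)$, whereas you frame it as continuity of the $L^2(P)$ inner product against $dQ/dP$.
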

\begin{proof}
 It is enough to show that the series $\sum_{i=1}^{\infty} \phi_i(\varepsilon_i-b_i)$
 converges in $L^1(Q)$. For any $n,m$, the Cauchy inequality implies
 \begin{eqnarray*}
 E_Q\left|\sum_{i=n}^m \phi_i(\varepsilon_i-b_i)\right|\leq \sqrt{E(dQ/dP)^2}\sqrt{E
 \left(\sum_{i=n}^m \phi_i(\varepsilon_i-b_i)\right)^2}
 \end{eqnarray*}
 and $\sum_{i=1}^{\infty} \phi_i(\varepsilon_i-b_i)$ is convergent in $L^2$, which proves
 $L^1(Q)$-convergence.
\end{proof}

Our main result needs stronger hypotheses on the $\varepsilon_i$ which we now present.

\begin{assumption}\label{relevant}
For each $x\geq 0$, both 
\begin{equation}\label{vavelgrof}
\inf_{i\geq 1}P(\varepsilon_i>x)>0\mbox{ and }
\inf_{i\geq 1}P(\varepsilon_i<-x)>0 
\end{equation}
hold. Furthermore, 
\begin{equation}\label{unni}
\sup_{i\in\mathbb{N}}E[\varepsilon_i^2 1_{\{|\varepsilon_i|\geq N\}}]\to 0,\ N\to\infty.
\end{equation}
\end{assumption}

\begin{remark}
{\rm Notice that \eqref{unni} is just the uniform
integrablity of the family $\varepsilon_i^2$,
$i\in\mathbb{N}$ of random variables. This is a mild assumption on the moments of the $\varepsilon_i$. 

The hypotheses \eqref{vavelgrof} appear to be more stringent. We remark, however, that assuming the presence
of extreme risks (i.e. $\varepsilon_i$ having unbounded support on both the positive and
negative axes) is a standard feature of most market models. Consider, e.g., the Black-Scholes
market (see \cite{bjork}) where the prices $S_{T_1},S_{T_2}$ of a given stock
at times $0<T_1<T_2$ are
such that $S_{T_2}=S_{T_1}X(T_1,T_2)$ with $S_{T_1}$, $X(T_1,T_2)$ independent
lognormal random variables. In this case the return on the stock, $S_{T_2}-S_{T_1}$ has 
a law with unbounded support in both directions.

Roughly speaking, the conditions in \eqref{vavelgrof}
stipulate that the risks of a (possibly extreme) price change are ``uniformly present'' for each asset.

For instance, if $\mathcal{P}:=\{\mathrm{Law}(\varepsilon_i):\ i\geq 1\}$ is a finite set and
each of its elements has a a support which is unbounded on $\mathbb{R}$ in both directions then 
Assumption \ref{relevant} holds. It is also very easy to provide suitable infinite families
$\mathcal{P}$: let each $\mu\in\mathcal{P}$ satisfy 
\begin{eqnarray*}
h(z) &\leq& \mu(\{x\leq -z\})\leq Cz^{-\theta}\\
h(z) &\leq& \mu(\{x\geq z\})\leq Cz^{-\theta}
\end{eqnarray*}
for all $z\geq 1$ with constants $C>0$, $\theta>2$ and with some function $h:[1,\infty)\to (0,\infty)$ (e.g.
$h(z)=c z^{-\eta}$ with some $\eta\geq \theta$ and with another constant $c>0$ or
$h(z)=c_1e^{-c_2 z^2}$ with $c_1,c_2>0$).}
\end{remark}

The following theorem clarifies the relationship between market parameters and various
concepts of absence of arbitrage.
\begin{theorem}\label{arbi}
Consider the following conditions.
\begin{enumerate}
 \item There is no asymptotic free lunch.
 
 \item For each $P'\sim P$ there exists $Q\in\mathcal{M}$ with $dQ/dP'\in L^{\infty}$.
 
 \item There exists $Q\in\mathcal{M}$ with $dQ/dP\in L^2$.
 
 \item $$\sum_{i=1}^{\infty} b_i^2<\infty$$

 \item There is no asymptotic arbitrage.
\end{enumerate}
Then $1.\Leftrightarrow 2.\Rightarrow 3. \Rightarrow 4.\Leftrightarrow 5.$ holds. Under Assumption \ref{relevant},
all the conditions $1.$-$5.$ are equivalent.
\end{theorem}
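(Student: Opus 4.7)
The plan is to prove the chain $1 \Leftrightarrow 2 \Rightarrow 3 \Rightarrow 4 \Leftrightarrow 5$ using standard probabilistic tools plus a Kreps--Yan style duality argument, and then, under Assumption \ref{relevant}, to close the cycle by constructing an explicit risk-neutral measure.

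The easy pieces come first. For $2 \Rightarrow 3$, take $P' = P$: any density in $L^{\infty}$ lies in $L^2$. For $3 \Rightarrow 4$, observe that $(\varepsilon_i)_{i \geq 1}$ is orthonormal in $L^2(P)$ and orthogonal to the constants, since the $\varepsilon_i$ are independent with mean zero and unit variance. Given $Z := dQ/dP \in L^2$ with $Q \in \mathcal{M}$, the Fourier coefficients of $Z$ along this orthonormal family are $E[Z \varepsilon_i] = E_Q \varepsilon_i = b_i$, so Bessel's inequality delivers $\sum_i b_i^2 \leq \|Z\|_{L^2(P)}^2 < \infty$. The equivalence $1 \Leftrightarrow 2$ is a Kreps--Yan theorem for large financial markets, which I would import from \cite{largo, def}: the polar of the (closure of the) no-free-lunch cone is identified with the risk-neutral measures of bounded density relative to any prescribed $P' \sim P$.

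For $4 \Leftrightarrow 5$, rewrite any $V(\psi(k)) = \sum_{i=1}^{k} \phi_i(\varepsilon_i - b_i)$; independence of the $\varepsilon_i$ gives $\mathrm{var}(V(\psi(k))) = \sum_i \phi_i^2$ and $EV(\psi(k)) = -\sum_i \phi_i b_i$. Cauchy--Schwarz then yields
\[
\bigl(EV(\psi(k))\bigr)^2 \leq \Bigl(\sum_i b_i^2\Bigr)\, \mathrm{var}\bigl(V(\psi(k))\bigr),
\]
so $\sum b_i^2 < \infty$ rules out $EV \to \infty$ together with $\mathrm{var} \to 0$, proving $4 \Rightarrow 5$. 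Conversely, if $c_k := \sum_{i=1}^{k} b_i^2 \to \infty$, the explicit choice $\phi_i(k) := -b_i c_k^{-3/4}$ for $i \leq k$ gives $EV(\psi(k)) = c_k^{1/4} \to \infty$ and $\mathrm{var}(V(\psi(k))) = c_k^{-1/2} \to 0$, exhibiting an asymptotic arbitrage.

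The main obstacle is closing the loop under Assumption \ref{relevant}. My plan is to construct $Q \in \mathcal{M}$ as an infinite product $dQ/dP = \prod_{i \geq 1} f_i(\varepsilon_i)$ with $f_i = 1 + b_i h_i(\varepsilon_i)$, where $h_i = \alpha_i \mathbf{1}_{\{\varepsilon_i \geq 1\}} + \beta_i \mathbf{1}_{\{\varepsilon_i \leq -1\}}$ is chosen so that $E h_i(\varepsilon_i) = 0$ and $E[h_i(\varepsilon_i)\varepsilon_i] = 1$. The uniform lower bounds in \eqref{vavelgrof} on $P(\varepsilon_i \geq 1)$ and $P(\varepsilon_i \leq -1)$, together with \eqref{unni} bounding $E[\varepsilon_i \mathbf{1}_{\{|\varepsilon_i| \geq 1\}}]$ from above, keep $\alpha_i, \beta_i, \|h_i\|_{\infty}$ and $Eh_i^2$ bounded uniformly in $i$. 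Hence $E f_i^2 = 1 + O(b_i^2)$, and $\sum b_i^2 < \infty$ makes the partial products converge in $L^2(P)$ and almost surely to a positive limit, after absorbing the finitely many indices $i$ with $|b_i|$ too large into a bounded multiplicative correction so as to preserve strict positivity. This gives $4 \Rightarrow 3$ directly. To upgrade to $4 \Rightarrow 2$ (which truly closes the cycle) I would either twist the above $Q$ by a bounded density-change adapted to the reference measure $P'$, or, more robustly, first prove $4 \Rightarrow 1$ by combining Lemma \ref{varhato} (so $E_Q V(\phi(k)) = 0$ for every $\phi \in \mathcal{A}$) with a uniform-integrability argument under the constructed $Q$ to transfer $V(\phi(k)) \to X \geq 0$ in probability into $E_Q X = 0$, forcing $X = 0$, and then invoking the already-established implication $1 \Rightarrow 2$. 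The delicate point, where I expect the technical core of the argument to lie, is securing the uniform integrability (or, alternatively, the $L^\infty$-boundedness of the twisted density) for arbitrary $P' \sim P$.
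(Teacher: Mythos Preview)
Your treatment of $2\Rightarrow 3$, $3\Rightarrow 4$ (via Bessel's inequality), $4\Leftrightarrow 5$ (via Cauchy--Schwarz and an explicit asymptotic-arbitrage sequence), and $1\Leftrightarrow 2$ (by citation) is correct and in fact more self-contained than the paper, which simply invokes \cite{largo,def} for $3\Rightarrow 4$ and $4\Leftrightarrow 5$.

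The gap is in closing the loop under Assumption~\ref{relevant}. Your product construction $dQ/dP=\prod_i\bigl(1+b_ih_i(\varepsilon_i)\bigr)$ does yield $4\Rightarrow 3$, but neither of your bridges to $4\Rightarrow 1$ or $4\Rightarrow 2$ works. The density lies only in $L^2(P)$, not in $L^\infty(P)$ (the product of the upper bounds $1+C|b_i|$ need not converge when only $\sum b_i^2<\infty$ is assumed), and the independence structure underlying the product is destroyed under an arbitrary $P'\sim P$, so the ``twist for $P'$'' is not available. The uniform-integrability route is blocked more fundamentally: the definition of asymptotic free lunch places no constraint on $\|\phi(k)\|_{\ell_2}$, so there is no reason whatsoever for $V(\phi(k))$ to be uniformly integrable under any measure. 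Example~\ref{aba} shows that this obstruction is genuine and not an artefact: there $b_i\equiv 0$, so $Q=P\in\mathcal{M}$ with $dQ/dP=1\in L^\infty$, yet an asymptotic free lunch exists; hence the mere existence of a well-behaved $Q$ cannot by itself exclude free lunch. The paper instead proves $4\Rightarrow 1$ \emph{directly}, by a case analysis on a putative free-lunch sequence $\phi(n)$: if $\|\phi(n)\|_{\ell_2}\to\infty$ one normalises, and then either $\max_i|\tilde\phi_i(n)|\to 0$, in which case the Lindeberg CLT (powered by \eqref{unni}) forces $\mathrm{Law}(V(\tilde\phi(n)))\to\mathcal{N}(-d,1)$, or some coordinate stays bounded away from zero, in which case the uniform two-sided tails \eqref{vavelgrof} force $P(V(\tilde\phi(n))\leq -1)\geq q/2>0$; either way $\liminf V(\tilde\phi(n))\geq 0$ is contradicted. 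If $\|\phi(n)\|_{\ell_2}$ stays bounded, Banach--Saks in $\ell_2$ produces $\phi^*\in\mathcal{A}$ with $V(\phi^*)=X\geq 0$, and \eqref{vavelgrof} again forces $\phi^*=0$. This direct probabilistic argument is where Assumption~\ref{relevant} does its real work, and it is the ingredient missing from your proposal.
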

\begin{proof}
The equivalence $1.\Leftrightarrow 2.$ follows from Theorem 1 of \cite{largo} (and it is true in
much greater generality, by infinite dimensional separation arguments for convex sets). 
$4.\Leftrightarrow 5.$, $3.\Rightarrow 4.$ are just Theorem 1  
and Proposition 3 of \cite{def}, respectively (deduced again by functional analytic arguments). 
Since $2.\Rightarrow 3.$ is trivial, it suffices to establish $4.\Rightarrow 1.$ under Assumption
\ref{relevant}. 

Let $\phi(n)\in\mathcal{E}$ such that $V(\phi(n))\to X$ a.s. for some random variable $X\in [0,\infty]$.
Let $k_n$ be such that $\phi_l(n)=0$ for
$l\geq k_n$. We may and will assume that $k_{n+1}>k_n$ for all $n$.
First let us consider the case where
$\sup_n ||\phi(n)||_{\ell_2}=\infty$. By extracting a subsequence (which we continue to denote by $n$)
we may and will assume $||\phi(n)||_{\ell_2}\to\infty$, $n\to\infty$.
Define $\tilde{\phi}_i(n):=
\phi_i(n)/||\phi(n)||_{\ell_2}$ for all $n,i$. Clearly, $\tilde{\phi}(n)\in\mathcal{E}$
and
\begin{equation}\label{harom}
\liminf_{n\to\infty}V(\tilde{\phi}(n))=\liminf_{n\to\infty}
\frac{V(\phi(n))}{\Vert\phi(n)\Vert_{\ell_2}}\geq 0\mbox{ a.s.} 
\end{equation}
Let $M:=\sqrt{\sum_{i=1}^{\infty} b_i^2}$.
We obviously have $|\sum_{i=1}^{\infty} \tilde{\phi}_i(n)b_i|\leq M$ for all $n$ since
$\Vert\tilde{\phi}(n)\Vert_{\ell_2}=1$.
Hence, along a subsequence (still denoted by $n$), one has 
$$
\sum_{i=1}^{\infty}\tilde{\phi}_i(n)b_i\to d,\quad n\to\infty
$$
for some $d\in\mathbb{R}$.
We now distinguish two subcases.

\textsl{Subcase 1:} When $\chi_n:=\max\{|\tilde{\phi}_i(n)|,\ i=1,\ldots,k_n\}\to 0$, $n\to\infty$. 
Fix $\eta,\delta>0$. Assumption \ref{relevant} implies that, for some $N=N(\delta)$, we have
$E[\varepsilon_i^21_{\{|\varepsilon_i|\geq N\}}]<\delta$ for all $i\geq 1$. Choose $n$ so large that $\eta/\chi_n\geq N$.
Notice that $\mathrm{var}(\sum_{i=1}^{k_n} \tilde{\phi}_i(n) \varepsilon_i)=1$ and  
$$
\sum_{i=1}^{k_n} E[\tilde{\phi}_i^2(n)\varepsilon_i^21_{\{|\tilde{\phi}_i(n)\varepsilon_i|\geq \eta\}}] 
\leq \sum_{i=1}^{k_n} E[\tilde{\phi}_i^2(n)\varepsilon_i^21_{\{{\chi}_n|\varepsilon_i|\geq \eta\}}]\leq
\delta \sum_{i=1}^{k_n} \tilde{\phi}_i^2(n)=\delta.
$$
As this works for arbitrary $\eta,\delta$, we conclude that the Lindeberg condition holds
for the sums $V(\tilde{\phi}(n))$, $n\geq 1$, so 
the the central limit theorem (see e.g.
Chapter 9 of \cite{chow})
applies. It follows that $\mathrm{Law}(V(\tilde{\phi}(n)))\to \mathcal{N}(-d,1)$ weakly
as $n\to\infty$. In particular, 
$P(V(\tilde{\phi}(n))<0)\to f$ for some $f>0$, an immediate contradiction with \eqref{harom}. So
Subcase 1 actually never occurs.

\textsl{Subcase 2:} When there is $c>0$ and $1\leq l(n)\leq k_n$ such that $|\tilde{\phi}_{l(n)}(n)|\geq c$,
for all $n$. Set $J_i:=\varepsilon_i-b_i$, $i\geq 1$. Note that 
\[
\left\vert\sum_{i\neq l(n)}\tilde{\phi}_i(n)b_i\right\vert\leq 
\Vert\tilde{\phi}(n)\Vert_{\ell_2}\, \Vert b\Vert_{\ell_2}\leq M,
\]
so we have, by Markov's inequality, for each $H>0$,
\begin{eqnarray}\nonumber
P\left(\sum_{i\neq l(n)}\tilde{\phi}_i(n)J_i>H\right) &\leq& 
\frac{\left(\sum_{i\neq l(n)}\tilde{\phi}_i(n)b_i\right)^2+E\left(\sum_{i\neq l(n)}\tilde{\phi}_i(n)\varepsilon_i\right)^2}{H^2}
\leq\\
\frac{M^2 + \sum_{i\neq l(n)}\tilde{\phi}_i(n)^2}{H^2} &\leq&
\frac{M^2 +1}{H^2}\to 0,\label{markov}
\end{eqnarray}
as $H\to\infty$, uniformly in $n$, in particular, 
$P(\sum_{i\neq l(n)}\tilde{\phi}_i(n)J_i\leq H)\geq 1/2$ holds for all $n$ with
some $H> 0$ large enough. However, Assumption \ref{relevant} implies that there is $q>0$ such that
$$
P(\varepsilon_{l(n)}<-(H+M+1)/c)\geq q,\ P(\varepsilon_{l(n)}>(H+M+1)/c)\geq q 
$$
both hold for all $n$. Note that $|\tilde{\phi}_{l(n)}(n)b_{l(n)}|\leq M$. Thus we get
$$
P(V(\tilde{\phi}(n))\leq -1)\geq P\left(\tilde{\phi}_{l(n)}(n)J_{l(n)}\leq -H-1,
\sum_{i\neq l(n)}\tilde{\phi}_i(n)J_i\leq H\right)\geq q/2,
$$
by independence of $\varepsilon_{l(n)}$ from $\varepsilon_i$, $i\neq l(n)$, another
contradiction with \eqref{harom}. This means that Subcase 2 never occurs either.

We now turn to the case $\sup_n ||\phi(n)||_{\ell_2}<\infty$. Then there is a 
subsequence which is weakly convergent in $\ell_2$ and, by the Banach-Saks theorem applied in $\ell_2$, 
convex combinations of $\phi(n)$
(which we continue to denote $\phi(n)$) satisfy
$$
\Vert\phi(n)-\phi^*\Vert_{\ell_2}=\sum_{i=1}^{\infty} (\phi_i(n)-\phi_i^*)^2\to 0,\ n\to\infty
$$
for some $\phi^*\in\mathcal{A}=\ell_2$. Hence, by orthonormality of the system $\varepsilon_i$, $i\geq 1$
in $L^2$, 
$$
E(V(\phi(n))-V(\phi^*))^2\to 0,\quad n\to\infty.
$$
Since $L^2$ convergence implies convergence in probability, we get
$V(\phi^*)=X$. If $\phi^*_i=0$ for all $i$ then $X=0$ and we are done. Otherwise
there would exist $l\geq 1$ with, say, $\phi^*_l>0$ (the case $\phi^*_l<0$ follows similarly).
We will show that this cannot happen.

Indeed, as in \eqref{markov},
$$
P\left(\sum_{i\neq l}{\phi}_i^* J_i>H\right)\leq \frac{M^2+\sum_{i\neq l}
({\phi}_i^*)^2}{H^2}\leq \frac{M^2+1}{H^2}
<1/2
$$
when $H$ is large enough. Then
$$
P(V({\phi}^*)\leq -1)\geq P\left({\phi}^*_{l}J_{l}\leq -H-1,
\sum_{i\neq l}{\phi}^*_i(n)J_i\leq H\right)>0,
$$
by independence and by Assumption \ref{relevant}, a contradiction with the fact that $V(\phi^*)=X\geq 0$.
\end{proof}

\begin{remark}
{\rm An important message of Theorem \ref{arbi} is that if APM is arbitrage free then
condition $4.$ should hold. Using terms of mathematical economics, this latter
condition means that the total squared Sharpe ratio of the given market is finite.

For us, however, the usefulness of Theorem \ref{arbi} comes from $2.$: it provides
$Q\in\mathcal{M}$ with a strong additional property (bounded $P$-density).} 
\end{remark}

\begin{example}\label{aba}
{\rm We do not know whether Assumption \ref{relevant} can be weakened. However, some additional assumptions
on the $\varepsilon_n$ are necessary for the validity of Theorem \ref{arbi}, 
as the following example shows.

Let $b_n=0$ for all $n\geq 1$ and let $\varepsilon_n$, $n\geq 2$ be independent with law
$$
P\left(\varepsilon_n=
\frac{\frac{1}{n}+\frac{1}{n^3}}{\sqrt{1+\frac{1}{n^2}-\frac{1}{n^4}-\frac{1}{n^6}}}\right)=1-\frac{1}{n^2},\ 
P\left(\varepsilon_n=
\frac{-n+\frac{1}{n^3}}{\sqrt{1+\frac{1}{n^2}-\frac{1}{n^4}-\frac{1}{n^6}}}\right)=\frac{1}{n^2}, 
$$
for all $n\geq 2$. One can check that $E\varepsilon_n=0$ and $E\varepsilon_n^2=1$, for all $n\geq 2$.
Let $\varepsilon_1$ be arbitrary, independent of the $\varepsilon_n$, $n\geq 2$ with zero mean and
unit variance.

Defining $\phi_i(k):=1$, $2\leq i\leq k$, $\phi_i(k):=0$, $i>k$, $i=1$, the Borel-Cantelli lemma
shows that for a.e. $\omega\in\Omega$ there is $m=m(\omega)$ such that $\varepsilon_i(\omega)\geq 1/(2i)$
for $i\geq m$ which means that
$$
V(\phi(k))(\omega)=\sum_{i=2}^k \varepsilon_i(\omega)\to\infty
$$
a.s. when $k\to\infty$, i.e. this sequence of strategies produces an asymptotic free lunch, even though
$b_n=0$ holds for all $n$. 

It follows that the implication $4.\Rightarrow 1.$ (on which our arguments in Section
\ref{hasznos} hinge) fails in the example just presented. 
}
\end{example}

Denote by $\overline{K}$ the closure of $K$
for convergence in probability. We now exploit the argument of Theorem \ref{arbi} to
prove the following natural, but far from obvious result. Define $K_1:=\{V(\phi):\phi\in\mathcal{A}\}$.
\begin{lemma}\label{closed} 
Under Assumption \ref{relevant} and $\sum_{i=1}^{\infty}b_i^2<\infty$,
\begin{equation}\label{rhs}
\overline{K}=K_1. 
\end{equation}
\end{lemma}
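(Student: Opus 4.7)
The plan is to prove the two inclusions of \eqref{rhs} separately. The inclusion $K_1\subseteq\overline{K}$ is straightforward: given $\phi\in\mathcal{A}=\ell_2$, the truncations $\phi^{(n)}:=(\phi_1,\ldots,\phi_n,0,0,\ldots)$ lie in $\mathcal{E}$, and orthonormality of $\{\varepsilon_i\}$ in $L^2$, together with $\phi,b\in\ell_2$, gives $V(\phi^{(n)})\to V(\phi)$ in $L^2$ and hence in probability. So $V(\phi)\in\overline{K}$.

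For the nontrivial inclusion $\overline{K}\subseteq K_1$, fix $X\in\overline{K}$ and $\phi(n)\in\mathcal{E}$ with $V(\phi(n))\to X$ in probability; I would dichotomize on whether $\sup_n\|\phi(n)\|_{\ell_2}$ is finite. The case $\sup_n\|\phi(n)\|_{\ell_2}=\infty$ is what I expect to be the main obstacle, since it calls for re-running the proof of Theorem \ref{arbi} without the almost sure nonnegativity of the limit. After passing to a subsequence with $\|\phi(n)\|_{\ell_2}\to\infty$, I would normalize, $\tilde\phi(n):=\phi(n)/\|\phi(n)\|_{\ell_2}\in\mathcal{E}$; tightness of $\{V(\phi(n))\}$ (which follows from convergence in probability) then forces $V(\tilde\phi(n))\to 0$ in probability. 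At this point the Subcase 1/Subcase 2 dichotomy of Theorem \ref{arbi} applies essentially verbatim: extracting a further subsequence so that $\sum_i\tilde\phi_i(n)b_i\to d$, either $\chi_n:=\max_i|\tilde\phi_i(n)|\to 0$ and the Lindeberg CLT yields $\mathrm{Law}(V(\tilde\phi(n)))\Rightarrow\mathcal{N}(-d,1)$, or $|\tilde\phi_{l(n)}(n)|\geq c$ for some $l(n)$ and the Markov-plus-independence argument from Theorem \ref{arbi} applies. Both subcases conclude with $\liminf_n P(V(\tilde\phi(n))\leq -1)>0$, which directly contradicts $V(\tilde\phi(n))\to 0$ in probability; crucially, no sign condition on $X$ is needed for this contradiction.

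In the remaining case $\sup_n\|\phi(n)\|_{\ell_2}<\infty$, I would first extract a subsequence (still called $\phi(n)$) on which $V(\phi(n))\to X$ almost surely and which is also weakly convergent in $\ell_2$ to some $\phi^*\in\mathcal{A}$. The Banach--Saks theorem then produces Ces\`aro-type convex combinations $\hat\phi(n)$ of tail elements of this subsequence with $\hat\phi(n)\to\phi^*$ in $\ell_2$-norm; by orthonormality of $\{\varepsilon_i\}$ this gives $V(\hat\phi(n))\to V(\phi^*)$ in $L^2$, hence in probability. On the other hand, since each $V(\hat\phi(n))$ is a convex combination of tail terms of the almost surely convergent sequence $V(\phi(n))$, it still converges almost surely to $X$. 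Identifying the two limits yields $V(\phi^*)=X\in K_1$, completing the proof.
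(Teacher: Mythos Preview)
Your proposal is correct and follows essentially the same route as the paper's own proof: the easy inclusion via truncation, then the dichotomy on $\sup_n\|\phi(n)\|_{\ell_2}$, with the unbounded case handled by normalizing and re-running the Subcase~1/Subcase~2 argument of Theorem~\ref{arbi} (now contradicting $V(\tilde\phi(n))\to 0$ rather than a sign condition), and the bounded case handled via Banach--Saks in $\ell_2$. Your treatment is in fact slightly more explicit than the paper's in two places: you spell out why tightness of $\{V(\phi(n))\}$ forces $V(\tilde\phi(n))\to 0$ in probability, and you are careful to first pass to an a.s.\ convergent subsequence before taking Ces\`aro means, so that the convex combinations still converge a.s.\ to $X$.
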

\begin{proof}
 Note that $\mathcal{E}\subset\mathcal{A}$ and for every $\phi\in\mathcal{A}$,
 $V(\phi(n))\to V(\phi)$ in $L^2$ and hence also in probability, where $\phi(n)\in\mathcal{E}$ is
 such that $\phi_i(n)=\phi_i$,
 $i\leq n$, $\phi_i(n)=0$, $i>n$. Thus it is enough to prove that $K_1$ is
 closed in probability. 
 
 Let $\phi(n)\in\mathcal{A}$ be such that $V(\phi(n))\to X$ almost
 surely with some (finite-valued) random variable $X$ as $n\to\infty$. We may and will assume $\phi(n)\in\mathcal{E}$, $n\in\mathbb{N}$.
 
 If we had $\sup_n||\phi(n)||_{\ell_2}=\infty$ then, repeating the argument of Theorem \ref{arbi} 
 (Subcases 1 and 2), we get that $V(\tilde{\phi}(n))\to 0$ a.s. and 
 $$
 \liminf_{n\to\infty}P(V(\tilde{\phi}(n))<0)\geq f\mbox{ for some }f>0,
 $$
 both hold: a contradiction. Hence $\sup_n||\phi(n)||_{\ell_2}<\infty$ and we can find,
 just like in the proof of Theorem \ref{arbi}, $\phi^*\in\mathcal{A}$ with $V(\phi^*)=X$.
\end{proof}

\begin{example}{\rm By the orthonormality of $\varepsilon_i$, $i\geq 1$ in $L^2$ it is obvious that
$K_1$ is closed in $L^2$. This, however, does not suffice in
subsequent arguments for proving the existence of optimal strategies, we need
the stronger property of being closed in probability.

This latter property can easily fail: consider the model described in Example \ref{aba} and
the sequence of strategies $\lambda_i(k):=1/\ln(k)$, $2\leq i\leq k$, $\lambda_i(k):=0$, $i>k$, $i=1$. By the Borel-Cantelli lemma, for a.e. $\omega$, only finitely
many terms of the series $\sum_{i=1}^{k}
\varepsilon_i(\omega)$, $k\geq 1$ differ from those of $$
\sum_{i=1}^k \frac{\frac{1}{i}+\frac{1}{i^3}}{\sqrt{1+\frac{1}{i^2}-\frac{1}{i^4}-\frac{1}{i^6}}}
$$
and the latter is asymptotically of the order $\ln(k)$. It follows that  
$V(\lambda(k))\to 1$ a.s. as $k\to\infty$ and $X(\omega):=1$, $\omega\in\Omega$ is not 
in $K_1$
since it is orthogonal to all of the $\varepsilon_i$ in the Hilbert space $L^2$. It is also easily
seen that, with the same sequence $\varepsilon_i$, $i\geq 1$, 
closedness of $K_1$ fails even for an arbitrary sequence $b_i$, $i\geq 1$ satisfying
$\sum_{i=1}^{\infty} |b_i|<\infty$: indeed, $X\in \overline{K}$ as before but $\mathrm{var}(X)=0$
while $\mathrm{var}(V(\phi))\neq 0$ for each $\phi\in\mathcal{A}$ which is not identically $0$, hence
$K_1$ is \emph{not} closed in probability.

It would be nice to know how Assumption \ref{relevant} can be weakened  
in Theorem \ref{bumbo}. It seems highly unlikely to prove the existence of an optimizer in the setting of
Section \ref{hasznos} without $K_1$ being closed in probability.
Consequently, the counterexample just sketched suggests that some additional assumptions on the 
$\varepsilon_i$
are needed (i.e. independence, mean zero and unit variance alone do not 
suffice).}
\end{example}

\section{Utility maximization}\label{hasznos}

We consider an investor with utility function $u:\mathbb{R}\to\mathbb{R}$. We assume
in the rest of the paper that $u$ is concave
and non-decreasing. Concavity expresses risk aversion, the non-decreasing property
means that the investor prefers more money to less.
We will use the following simple lemma.

\begin{lemma}\label{lena}
 If $u$ is not constant then there exist $c,C>0$ such that $u(x)\leq -c|x|+C$ for
 all $x\leq 0$. 
\end{lemma}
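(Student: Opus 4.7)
The plan is to exploit concavity of $u$ to produce a linear upper bound valid on a left half-line, and then patch up the (at most bounded) remaining interval using monotonicity.

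First I would use the hypothesis that $u$ is non-constant and non-decreasing to pick points $a<b$ with $u(a)<u(b)$, and set
$$c:=\frac{u(b)-u(a)}{b-a}>0.$$
Concavity is equivalent to the three-chord / non-increasing secant slopes property, so for every $x<a$ one has
$$\frac{u(a)-u(x)}{a-x}\ \geq\ \frac{u(b)-u(a)}{b-a}\ =\ c,$$
which rearranges to $u(x)\leq u(a)+c(x-a)=(u(a)-ca)+cx$ for all $x\leq a$.

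Now I would split according to the sign of $a$. If $a\geq 0$, then every $x\leq 0$ satisfies $x\leq a$, and since $x\leq 0$ gives $cx=-c|x|$, the bound reads $u(x)\leq -c|x|+(u(a)-ca)$. If $a<0$, the same bound holds for $x\leq a$, while on the leftover interval $a<x\leq 0$ monotonicity yields $u(x)\leq u(0)$ and on that interval $|x|\leq |a|$, hence $u(x)\leq u(0)\leq u(0)+c|a|-c|x|$. In either case, choosing
$$C:=\max\bigl\{\,u(a)-ca,\ u(0)+c|a|,\ 1\,\bigr\}>0$$
produces the required inequality $u(x)\leq -c|x|+C$ for all $x\leq 0$ (the third entry is inserted only to guarantee the sign condition $C>0$ demanded by the statement).

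I do not anticipate any real obstacle; the only point requiring a moment's care is verifying that the pair $(a,b)$ witnessing non-constancy actually exists (it does because a non-decreasing non-constant function must strictly increase somewhere) and remembering to treat the case $a<0$ separately so that the bound extends all the way up to $x=0$.
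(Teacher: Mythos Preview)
Your proof is correct and follows essentially the same approach as the paper's: use concavity to obtain a linear upper bound on a left half-line, then handle the remaining bounded interval via monotonicity. The only cosmetic difference is that you take $c$ to be a secant slope between two points where $u$ strictly increases, whereas the paper takes $c$ to be the left derivative of $u$ at a suitably chosen point $x^*\leq 0$; both choices feed into the identical two-region argument.
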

\begin{proof}
As we have $u(-\infty)=-\infty$ in this case, there is $x^*\leq 0$ such that 
$u(x)$, $x\leq x^*+1$ is a strictly increasing function and $u(x^*)<0$. Denoting by $d^*:=u'(x^*-)>0$
its left-hand side derivative, we have
$$
u(x)\leq u(x^*)+(x-x^*)d^*,\ x\leq x^*.
$$
On the other hand, $$
u(x)\leq |u(0)|\leq -d^*|x|+d^*|x^*|+|u(0)|
$$ 
for $x^*\leq x\leq 0$.
Setting $c:=d^*$ and $C:=d^*|x^*|+|u(0)|$, the statement is shown.
\end{proof}

For $x\in\mathbb{R}$, we denote $x^+:=\max\{0,x\}$, $x^-:=\max\{-x,0\}$.

\begin{remark}\label{cista}
{\rm When $u(0)=0$, it follows from Lemma \ref{lena} that, for any $x\leq 0$,
\[
|x|\leq \frac{u^-(x)}{c}+\frac{C}{c}.
\]
}
\end{remark}

We first assert the existence of an optimal investment in the case where the investor's utility function is
bounded from above. The proof will be based on Theorem \ref{arbi}.

\begin{proposition}\label{main} Assume the existence of $Q\in\mathcal{M}$ with $dQ/dP\in L^{\infty}$.
Let $u:\mathbb{R}\to\mathbb{R}$ be bounded from above. Then there exists
$X\in\overline{K}_1$ such that 
\begin{equation}\label{force1}
Eu(X)=\sup_{\phi\in\mathcal{A}}Eu(V(\phi)),
\end{equation}
where $\overline{K}_1$ denotes the closure of $K_1$ with respect to convergence in probability.
\end{proposition}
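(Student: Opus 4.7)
The plan is to extract, from a maximizing sequence $\phi(n)\in\mathcal{A}$, convex combinations whose returns converge almost surely to the desired optimizer in $\overline{K}_1$. The starting point is Lemma \ref{lena}: assuming $u$ is nonconstant (otherwise any element of $\overline{K}_1$ works) we have $u(x)\leq -c|x|+C$ for $x\leq 0$. Combined with the uniform upper bound $u\leq\bar u$, this yields
\[
Eu(V(\phi))\leq \bar u^{+}+C-c\,EV(\phi)^{-},\qquad \phi\in\mathcal{A}.
\]
Choosing $\phi(n)\in\mathcal{A}$ with $Eu(V(\phi(n)))\to s:=\sup_{\phi\in\mathcal{A}}Eu(V(\phi))$, which is finite since $u(0)\leq s\leq\bar u$, the display gives $\sup_n EV(\phi(n))^{-}<\infty$.

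Next, I exploit the bounded risk-neutral density. Since $dQ/dP\in L^{\infty}$, the inequality
\[
E_Q V(\phi(n))^{-}\leq \|dQ/dP\|_\infty\,EV(\phi(n))^{-}
\]
shows that $E_Q V(\phi(n))^{-}$ is uniformly bounded. Theorem \ref{arbi} guarantees $\sum_i b_i^2<\infty$ (condition 3 implies 4) so Lemma \ref{varhato} applies and $E_Q V(\phi(n))=0$; consequently $E_Q V(\phi(n))^{+}=E_Q V(\phi(n))^{-}$, and the sequence $V(\phi(n))$ is bounded in $L^1(Q)$. Koml\'os' theorem then furnishes a subsequence whose Ces\`aro means $V(\tilde\phi(N))$ converge $Q$-a.s., hence $P$-a.s.\ (as $Q\sim P$), to some $X\in L^1(Q)$. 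The weights $\tilde\phi(N)$ are themselves convex combinations of elements of the vector space $\mathcal{A}=\ell_2$, so $V(\tilde\phi(N))\in K_1$ and therefore $X\in\overline{K}_1$.

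It remains to verify optimality. Concavity of $u$ yields $Eu(V(\tilde\phi(N)))\geq$ the corresponding convex combination of the $Eu(V(\phi(n_k)))$, so $\liminf_N Eu(V(\tilde\phi(N)))\geq s$. Since $u$ is continuous and bounded above, the reverse Fatou lemma applied to the a.s.\ convergence $u(V(\tilde\phi(N)))\to u(X)$ gives $Eu(X)\geq\limsup_N Eu(V(\tilde\phi(N)))\geq s$, and equality follows from the definition of $s$. The main obstacle is producing the $L^1(Q)$-bound: it is the mechanism by which bounded-above utility is converted into compactness, and it is here that the hypothesis $dQ/dP\in L^\infty$ is decisive. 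Once this bound is in hand, Koml\'os performs the compactness step and neither Assumption \ref{relevant} nor closedness of $K_1$ itself is needed to locate $X$ inside $\overline{K}_1$.
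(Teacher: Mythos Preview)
Your proof is correct and follows essentially the same route as the paper: bound $EV(\phi(n))^{-}$ via Lemma~\ref{lena}, transfer to $E_Q V(\phi(n))^{-}$ using $dQ/dP\in L^{\infty}$, invoke Lemma~\ref{varhato} to get $L^1(Q)$-boundedness, apply Koml\'os, and conclude with concavity plus reverse Fatou. You are slightly more explicit than the paper in noting that $\sum_i b_i^2<\infty$ follows from the hypothesis via the implication $3.\Rightarrow 4.$ of Theorem~\ref{arbi}, which is needed for Lemma~\ref{varhato}; otherwise the arguments coincide.
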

\begin{proof} Note that, $u$ being bounded above, $Eu(V(\phi))$ makes sense for all $\phi\in\mathcal{A}$. 

Let $\phi(n)$ be a sequence such that $\sup_{\phi\in\mathcal{A}}Eu(V(\phi))=\lim_{n\to\infty}
Eu(V(\phi(n)))$ and $Eu(V(\phi(n)))>-\infty$ for all $n$. Then, since $u$ is bounded from above, $\sup_n Eu^-(V(\phi(n)))<\infty$.
The case of constant $u$ is trivial, otherwise 
$$
\sup_n EV^-(\phi(n))<\infty
$$ 
holds, by Lemma \ref{lena}. Then also $\sup_n E_QV^-(\phi(n))<\infty$.
It follows from Lemma \ref{varhato} that $E_QV(\phi(n))=0$ for all $n$,
consequently, $\sup_n E_Q|V(\phi(n))|<\infty$.

By the Koml\'os theorem (see \cite{komlos}) applied in $L^1(Q)$, convex combinations of $V(\phi(n))$ converge to some
random variable $X$ almost surely. Since ${K}_1$ is trivially convex, $X\in\overline{K}_1$. By concavity of $u$
and the (reverse) Fatou lemma, $Eu(X)\geq \lim_{n\to\infty}Eu(V(\phi(n)))$.
\end{proof}

\begin{corollary}\label{akutt} Let Assumption \ref{relevant} be in force and assume $\sum_{i=1}^{\infty}b_i^2<\infty$.
Let $u:\mathbb{R}\to\mathbb{R}$ be bounded from above. Then there exists
$\phi^*\in\mathcal{A}$ such that 
\begin{equation}\label{force}
Eu(V(\phi^*))=\sup_{\phi\in\mathcal{A}}Eu(V(\phi)).
\end{equation}
\end{corollary}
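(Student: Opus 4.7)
The plan is to assemble the ingredients already in place: Theorem \ref{arbi} supplies a risk-neutral measure with bounded density, Proposition \ref{main} then delivers an optimizer in $\overline{K}_1$, and Lemma \ref{closed} closes the gap by identifying $\overline{K}_1$ with $K_1$, so that the limit is realized by some admissible strategy.

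First I would invoke Theorem \ref{arbi}. Under Assumption \ref{relevant}, the five listed conditions are equivalent. Since $\sum_{i=1}^\infty b_i^2<\infty$ is condition 4, condition 2 also holds, and specializing $P'=P$ yields some $Q\in\mathcal{M}$ with $dQ/dP\in L^\infty\subset L^2$. Next, Proposition \ref{main} applies verbatim (its only hypothesis, besides boundedness of $u$ from above, is exactly the existence of such a $Q$) and produces a random variable $X\in\overline{K}_1$ with
\[
Eu(X)=\sup_{\phi\in\mathcal{A}}Eu(V(\phi)).
\]

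The remaining issue is to upgrade $X\in\overline{K}_1$ to $X\in K_1$. Here I would use Lemma \ref{closed}, which asserts $\overline{K}=K_1$ under the current hypotheses. Since $K\subset K_1$, we have $\overline{K}\subset\overline{K}_1$, so $K_1=\overline{K}\subset\overline{K}_1$; on the other hand, $\overline{K}$ is by definition closed in probability and equals $K_1$, so $K_1$ is itself closed in probability, giving $\overline{K}_1=K_1$. Therefore $X\in K_1$, meaning $X=V(\phi^*)$ for some $\phi^*\in\mathcal{A}$, and this $\phi^*$ satisfies \eqref{force}.

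There is essentially no hard step beyond citing the earlier results; the only subtlety to watch is making sure the chain $\sum b_i^2<\infty\Rightarrow$ existence of $Q\in\mathcal{M}$ with $dQ/dP\in L^\infty$ genuinely goes through (it does, via the equivalence $4.\Leftrightarrow 1.\Leftrightarrow 2.$ under Assumption \ref{relevant}), and that Lemma \ref{closed} really closes $K_1$ in probability rather than merely in $L^2$. Once these two points are in hand the corollary follows in one line from Proposition \ref{main}.
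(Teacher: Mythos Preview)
Your proposal is correct and follows precisely the paper's own route: invoke Theorem \ref{arbi} to obtain $Q\in\mathcal{M}$ with bounded density, apply Proposition \ref{main} to get $X\in\overline{K}_1$ attaining the supremum, and then use Lemma \ref{closed} to conclude that $K_1$ is closed in probability, whence $X=V(\phi^*)$ for some $\phi^*\in\mathcal{A}$. Your added remark that $\overline{K}_1=K_1$ follows from $\overline{K}=K_1$ is a helpful clarification the paper leaves implicit.
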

\begin{proof} The hypotheses imply the existence of $Q\in\mathcal{M}$ with $dQ/dP\in L^{\infty}$, by Theorem \ref{arbi}. The set
${K}_1$ is closed by Lemma
\ref{closed}, hence there is $\phi^*\in\mathcal{A}$ such that $X=V(\phi^*)$. 
\end{proof}

\begin{corollary}\label{mobutu}
Let Assumption \ref{relevant} be in force, assume $\sum_{i=1}^{\infty}b_i^2<\infty$
and that $Eu(V(\phi))$ is finite for 
all $\phi\in\mathcal{A}$. Furthermore, let $u$ be strictly increasing and continuously
differentiable with bounded $u'$. If \eqref{force} holds then there exists $Q\in\mathcal{M}$ such that
$$
\frac{dQ}{dP}=\frac{u'\left(\sum_{i=1}^{\infty} \phi^*_i (\varepsilon_i-b_i)\right)}
{Eu'\left(\sum_{i=1}^{\infty} \phi^*_i (\varepsilon_i-b_i)\right)}.
$$
\end{corollary}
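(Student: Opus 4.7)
The plan is to derive the claim from the first-order optimality condition at $\phi^*$: perturb along each coordinate direction in $\mathcal{A}=\ell_2$ and read off the risk-neutrality property.

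First I would check that the proposed density is well-defined and yields an equivalent measure. Since $u$ is strictly increasing, concave and $C^1$, its derivative $u'$ is strictly positive everywhere: by concavity $u'$ is non-increasing, so if $u'(x_0)=0$ for some $x_0$ then $u'\equiv 0$ on $[x_0,\infty)$, contradicting strict monotonicity. Hence $g:=u'(V(\phi^*))>0$ a.s. Boundedness of $u'$ by some $M'>0$ yields $g\in L^\infty$, so $c:=Eg\in(0,\infty)$ and $dQ/dP:=g/c$ defines a probability measure $Q\sim P$.

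Next, for any $\eta\in\mathcal{A}$, I would consider the perturbation $\phi^*+t\eta\in\mathcal{A}$ and use that $V$ is linear, so $V(\phi^*+t\eta)=V(\phi^*)+tV(\eta)$. By the finiteness hypothesis and \eqref{force}, $t\mapsto Eu(V(\phi^*)+tV(\eta))$ is finite-valued on $\mathbb{R}$ and attains its maximum at $t=0$. The mean value theorem together with $|u'|\leq M'$ provides the uniform bound
$$\left|\frac{u(V(\phi^*)+tV(\eta))-u(V(\phi^*))}{t}\right|\leq M'|V(\eta)|$$
for every $t\neq 0$, and $V(\eta)\in L^2\subset L^1$ supplies an integrable majorant. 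Dominated convergence then justifies differentiation under the expectation, giving
$$0=\left.\frac{d}{dt}\right|_{t=0}Eu\bigl(V(\phi^*+t\eta)\bigr)=E\bigl[u'(V(\phi^*))\,V(\eta)\bigr]=c\,E_Q V(\eta).$$

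Finally, I would specialize $\eta$ to the $i$th unit vector in $\mathcal{E}\subset\mathcal{A}$ (i.e.\ $\eta_i=1$, $\eta_j=0$ for $j\neq i$), so that $V(\eta)=\varepsilon_i-b_i$. The identity above then gives $E_Q\varepsilon_i=b_i$ for every $i\geq 1$, which is exactly the characterisation of $\mathcal{M}$ recorded just after the definition of equivalent risk-neutral measure. Hence $Q\in\mathcal{M}$. The main obstacle is the interchange of differentiation and expectation, which is handled cleanly by the assumed boundedness of $u'$ combined with the $L^2$-estimate on $V(\eta)$; the only other subtle point is the strict positivity of $u'$, required for $Q\sim P$, which is where strict monotonicity of $u$ is used.
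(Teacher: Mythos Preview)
Your argument is correct and follows essentially the same route as the paper's proof: perturb the optimizer along a single coordinate, use the mean value theorem together with the boundedness of $u'$ to justify dominated convergence, and read off the first-order condition $E[u'(V(\phi^*))(\varepsilon_l-b_l)]=0$. The only cosmetic difference is that the paper perturbs directly in coordinate $l$ (working with the real-variable function $x\mapsto Eu(xJ_l+\sum_{i\neq l}\phi^*_iJ_i)$), whereas you first derive the identity for a general direction $\eta\in\mathcal{A}$ and then specialize to the unit vectors; your slightly more careful justification that $u'>0$ (hence $Q\sim P$) is a welcome addition.
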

\begin{proof} Fix $l\in\mathbb{N}$ and let $\phi^*$ be an optimal strategy as in \eqref{force}.
Consider the function $g(x):=Eu(x J_l+\sum_{i\neq l}\phi^*_i J_i)$, $x\in\mathbb{R}$,
where $J_i=\varepsilon_i-b_i$, $i\geq 1$.
Clearly, $g$ attains its maximum at $x=\phi^*_l$. By the mean value theorem, for each 
$\phi\in\mathbb{R}$ and $h\in (-1,1)$
we have $$
\frac{1}{h}\left[u\left([\phi+h]J_l+\sum_{i\neq l}\phi^*_i J_i\right)-u\left(\phi J_l+
\sum_{i\neq l}\phi^*_i J_i\right)\right]
=u'\left( \xi(h)J_l+\sum_{i\neq l}\phi^*_i J_i\right)J_l
$$
for some random variable $\xi(h)$ between $\phi$ and $\phi+h$. 
We let $h\to 0$. Since $u'$ is bounded, Lebesgue's theorem implies
that $g'(\phi)$ exists and equals $E[u'(\phi J_l+\sum_{i\neq l}\phi^*_i J_i)J_l]$.
It follows that
$$
0=g'(\phi^*_l)=E\left[u'\left( \phi^*_lJ_l+\sum_{i\neq l}\phi^*_i J_i\right)J_l\right].
$$
As this holds for each $l$, we get that $Q\in\mathcal{M}$ noting that
$u$ is strictly increasing hence $u'(x)>0$ for all $x$.
\end{proof}

\begin{remark}{\rm
The construction of Corollary \ref{mobutu} is standard in the context of finitely many assets,
see \cite{davis}.} 
\end{remark}

Now we turn to the case of $u$ not necessarily bounded above.
In this setting the domain of optimization will be
$$
\mathcal{A}'(u):=\{\phi\in\mathcal{A}:\ Eu^-(V(\phi))<\infty\}
$$
so that the expectation $Eu(V(\phi))$ makes sense for all $\phi\in\mathcal{A}'(u)$.
Note that $\mathcal{A}'(u)$ is never empty, $\phi_i=0$, $i\geq 1$ is therein.
The next result is the main theorem of the present paper.

\begin{theorem}\label{bumbo}
 Let Assumption \ref{relevant} be in force and assume 
\begin{equation}\label{manyineni}
\sum_{i=1}^{\infty}b_i^2<\infty.
\end{equation} 
Let $u$ be such that for some constant $C_1\geq 0$,
 \begin{equation}\label{retes}
 u(x)\leq C_1(x^{\alpha}+1),\ x\geq 0.
 \end{equation}
with $0\leq \alpha<1$. Then
 there exists
$\phi^*\in\mathcal{A}'(u)$ such that 
\begin{equation}\label{aki}
Eu(V(\phi^*))=\sup_{\phi\in\mathcal{A}'(u)}Eu(V(\phi))<\infty.
\end{equation}
\end{theorem}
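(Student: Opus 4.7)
The plan is to follow the template of Corollary \ref{akutt} but with careful control of the unbounded $u^+$ via \eqref{retes}. After shifting so that $u(0)=0$ (which does not affect the optimization) the constant-$u$ case is trivial with $\phi^{*}=0$, so Lemma \ref{lena} supplies constants $c,C>0$ with $c\,V(\phi)^-\le u^-(V(\phi))+C$ pointwise. Fix $Q\in\mathcal{M}$ with $dQ/dP\in L^{\infty}$ given by Theorem \ref{arbi}, set $M:=\sqrt{\sum_i b_i^2}$, and pick a maximizing sequence $\phi(n)\in\mathcal{A}'(u)$; since $\phi=0\in\mathcal{A}'(u)$ we may assume $Eu(V(\phi(n)))\ge u(0)-1=-1$, i.e., $A_n\le 1+Eu^+(V(\phi(n)))$ with $A_n:=Eu^-(V(\phi(n)))$.

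The critical step is to show that $L_n:=\Vert\phi(n)\Vert_{\ell_2}$ stays bounded. If not, set $\tilde\phi(n):=\phi(n)/L_n$ along a subsequence and reapply the two-subcase dichotomy from the proof of Theorem \ref{arbi} (Lindeberg CLT in Subcase 1, the uniform tail estimates of Assumption \ref{relevant} in Subcase 2), possibly after approximating by $\mathcal{E}$-valued sequences as in the proof of Lemma \ref{closed}, to obtain $q>0$ along a further subsequence with $P(V(\tilde\phi(n))<-1)\ge q$. Rescaling and Lemma \ref{lena} then give the linear lower bound $A_n\ge cqL_n-C$. On the other hand, Jensen for the concave $x^{\alpha}$ together with \eqref{retes}, and the identity $EV(\phi(n))^+=EV(\phi(n))^-+EV(\phi(n))$ combined with $|EV(\phi(n))|\le ML_n$ (Cauchy--Schwarz), yields $Eu^+(V(\phi(n)))\le K_1+K_2(A_n^{\alpha}+L_n^{\alpha})$. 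Using $L_n\le (A_n+C)/(cq)$ this reduces to $A_n\le K_3+K_4 A_n^{\alpha}$; since $\alpha<1$ this forces $A_n$ bounded, contradicting $A_n\ge cqL_n-C\to\infty$.

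Once $\sup_n L_n<\infty$ the finish is routine. By Banach--Saks in $\ell_2$, convex combinations $\hat\phi(n)$ converge in $\ell_2$-norm to some $\phi^{*}\in\mathcal{A}$, and concavity of $u$ preserves the maximizing property. Orthonormality of $(\varepsilon_i)$ yields $V(\hat\phi(n))\to V(\phi^{*})$ in $L^2$, hence in probability and, along a subsequence, almost surely. Since $V(\hat\phi(n))$ is bounded in $L^2$, the family $(V(\hat\phi(n))^+)^{\alpha}$ is bounded in $L^{2/\alpha}$ with $2/\alpha>1$ and hence uniformly integrable; together with \eqref{retes} this upgrades to $Eu^+(V(\hat\phi(n)))\to Eu^+(V(\phi^{*}))<\infty$, while Fatou for $u^-$ gives $Eu^-(V(\phi^{*}))\le\liminf Eu^-(V(\hat\phi(n)))<\infty$. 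Therefore $\phi^{*}\in\mathcal{A}'(u)$ and $Eu(V(\phi^{*}))\ge\limsup Eu(V(\hat\phi(n)))=\sup_{\phi\in\mathcal{A}'(u)}Eu(V(\phi))<\infty$, giving \eqref{aki}.

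The main obstacle is the $\ell_2$-boundedness step: one must play the sublinear-in-$L_n$ upper bound on $Eu^+$ coming from \eqref{retes} and Jensen off against the linear-in-$L_n$ lower bound on $Eu^-$ produced by the rescaled arbitrage dichotomy together with Lemma \ref{lena}, making crucial use of $\alpha<1$. All remaining ingredients (Banach--Saks, $L^{2/\alpha}$-based uniform integrability, Fatou for $u^-$) are fairly standard once the maximizing sequence is known to live in a bounded subset of $\mathcal{A}=\ell_2$.
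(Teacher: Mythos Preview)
Your argument is correct but follows a genuinely different route from the paper's.

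The paper proves Theorem \ref{bumbo} by a recursive bootstrap: starting from Corollary \ref{akutt} (the bounded-$u$ case), it applies Corollary \ref{mobutu} to manufacture $Q_n\in\mathcal{M}$ with $dQ_n/dP\in L^\infty$ and $dP/dQ_n\in L^{p_n-\epsilon}$ for an increasing sequence $p_n\to\infty$ (with $p_{n+1}=2p_n+2$). Each such $Q_n$ enters a H\"older/change-of-measure estimate \eqref{moon} that proves the theorem for $\alpha<p_n/(p_n+1)$, and the optimizer thus obtained is fed back into Corollary \ref{mobutu} to produce $Q_{n+1}$. Compactness is obtained via Koml\'os in $L^1(Q_n)$, followed by Lemma \ref{closed} to identify the limit in $K_1$.

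Your approach bypasses the bootstrap entirely. You bound $\Vert\phi(n)\Vert_{\ell_2}$ directly by replaying the rescaled Subcase~1/Subcase~2 dichotomy of Theorem \ref{arbi} to force a \emph{linear} lower bound $A_n\gtrsim L_n$ on $Eu^-$, and pit it against the \emph{sublinear} upper bound $Eu^+\lesssim A_n^{\alpha}+L_n^{\alpha}$ obtained from Jensen under $P$ and $|EV(\phi(n))|\le M L_n$. No change of measure is needed at this stage; indeed the $Q$ you fix at the outset is never used and can be deleted. Once $\sup_n L_n<\infty$, Banach--Saks in $\ell_2$ and $L^{2/\alpha}$-uniform integrability of $(V^+)^{\alpha}$ finish the job in a single pass valid for all $\alpha<1$.

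What each approach buys: yours is shorter, avoids the recursion, and never needs integrability control on $dP/dQ$. The paper's route, on the other hand, produces Corollary \ref{horhos} (for every $p$ there is $Q\in\mathcal{M}$ with $dP/dQ\in L^p$) as a free byproduct of the iteration---something your argument does not deliver. Two minor points worth making explicit in a write-up: (i) the Subcase~1/2 dichotomy was proved in Theorem \ref{arbi} for $\tilde\phi(n)\in\mathcal{E}$, so you should spell out the passage to $\tilde\phi(n)\in\ell_2$ (e.g.\ truncate and renormalize, or note that the Lindeberg CLT and the Markov estimate \eqref{markov} go through unchanged for infinite sums); (ii) when $\alpha=0$ the Jensen step is vacuous and $u^+$ is bounded, so the bound $A_n\le 1+2C_1$ combined with $A_n\ge cqL_n-C$ already gives the contradiction.
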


\begin{remark}\label{palp} {\rm The only papers in the 
existing literature that are closely related to ours are \cite{paolo}
and \cite{oleksi}, where the expected
utility of an investor is maximized in a continuous-time large financial market
over a set of ``generalized portfolios''. The first paper is about
maximizing terminal utility while the second paper deals with utility
from consumption, allowing random utility functions and a stochastic clock.
Note that, in both of those papers, the utility function $u$ is defined on the positive axis 
$\mathbb{R}_+$ while here we consider $u:\mathbb{R}\to\mathbb{R}$ which allows for an analysis of risks related to losses. See the very recent \cite{leguj} about some
progress in continuous-time large markets with $u:\mathbb{R}\to\mathbb{R}$. 

The ``generalized portfolios'' are processes which
are in the closure of the value processes of portfolios
in the market segments in a suitable topology (the \'Emery topology). 
This means that the value of a generalized portfolio can be
approximated with arbitrary precision by values of investment opportunities in
finitely many assets. 

Such a choice is reasonable and, by examples of \cite{josef}, probably inevitable.
We would, however, like to see generalized portfolios represented
as investments into infinitely many assets, with the size of the position
in each asset explicitly given. Theorem \ref{bumbo} above finds the optimizer in the 
class $\mathcal{A}'(u)$ whose elements have an obvious interpretation in terms of portfolios in infinitely many assets.

To make our point more clear, let $\varepsilon_i$ be an \emph{arbitrary} sequence of random variables, for a moment. 
The closure of $K$ in probability has no intrinsic characterization in that case, i.e.
for some $X\in \overline{K}$ we may find $\phi(n)\in\mathcal{E}$ such that 
$V(\phi(n))\to X$ almost surely, but it is by no means easy to decide whether there 
exists some
sequence $\phi$ with $X=\sum_{i=1}^{\infty} \phi_i (\varepsilon_i-b_i)$. Example \ref{aba} shows that this
problem may arise even in the case of independent $\varepsilon_i$. 
However, in the very specific setting of the Arbitrage Pricing Model and stipulating
Assumption \ref{relevant}, such a characterization of $\overline{K}$ becomes possible.
 
We also refer to the recent, smaller scale companion paper \cite{ijtaf}, where \eqref{vavelgrof} is relaxed to
a simple no-arbitrage condition at the price of requiring uniform \emph{exponential} integrability
of the $\varepsilon_i$ instead of \eqref{unni}.}
\end{remark}

\begin{remark}{\rm
The standard assumption on $u$ in continuous-time semimartingale models is \emph{reasonable
asymptotic elasticity}, see \cite{w}. In discrete-time markets this condition can be
slightly weakened, \cite{rs}. In the present setting even less (namely,
\eqref{retes} above) suffices. We cannot allow $\alpha=1$ in \eqref{retes} as the conclusion
of Theorem \ref{bumbo} clearly fails for linear $u$.}
\end{remark}

\begin{proof}[Proof of Theorem \ref{bumbo}.]
The argument is based on a recursive procedure
using Proposition \ref{main} and Corollary \ref{mobutu} as its springboard.
Assuming that Theorem \ref{bumbo} has been established for some $1>\alpha_n\geq 0$, we use Corollary
\ref{mobutu} to obtain an element of $\mathcal{M}$ which, in turn, helps to establish Theorem \ref{bumbo}
for some $\alpha_{n+1}>\alpha_n$. As $\alpha_n\to 1$, $n\to\infty$ will hold, we will finally get Theorem \ref{bumbo}
for all $\alpha<1$.

Fix $0<\epsilon<1$ and let us define 
$$
u_1(x):=\epsilon x-1,\ x<0,\quad u_1(x):=\frac{-1}{(1+x)^{\epsilon}},\ x\geq 0.
$$
This is a concave and continuously differentiable function to which Corollaries \ref{akutt} and  \ref{mobutu}
clearly apply hence we get $Q\in\mathcal{M}$ with $dQ/dP\in L^{\infty}$ and
$dP/dQ\in L^{2/(1+\epsilon)}$ since $\sum_{i=1}^{\infty}\phi^*_i(\varepsilon_i-b_i)\in L^2$ and
$1/u_1'(x)=(1+x)^{\epsilon+1}/\epsilon$ for $x\geq 0$. Since $2/(1+\epsilon)\to 2$ as $\epsilon\to 0$,
we can conclude that for $p_1:=2$ there is $Q_1=Q_1(\epsilon)\in\mathcal{M}$ with $dQ_1/dP\in L^{\infty}$
and $dP/dQ_1\in L^{p_1-\epsilon}$, for each $\epsilon$.

Let us now suppose that, for $n\geq 1$, $p_n\geq 2$ and for any $0<\epsilon<1$ 
there exists $Q_n=Q_n(\epsilon)\in\mathcal{M}$ with $dQ_n/dP\in L^{\infty}$ and
$dP/dQ_n\in L^{p_n-\epsilon}$. We proceed to show that this implies the existence of
an optimiser for any concave, nondecreasing $u$  satisfying \eqref{retes}
with $\alpha<p_n/(p_n+1)$.

Fix $Q\in\mathcal{M}$. By adding a constant to $u$, we may and will assume $u(0)=0$. 
Adding a constant preserves the validity of \eqref{retes} (perhaps with a different $C_1$)
and does not affect $\mathcal{A}'(u)$ either. The case of constant $u$ is trivial, otherwise
we may and will assume that the conclusion of Lemma \ref{lena} holds.

Let $\phi(k)\in\mathcal{A}'(u)$, $k\in\mathbb{N}$ be such that
 $$
 Eu(V(\phi(k))\to \sup_{\phi\in\mathcal{A}'(u)}Eu(V(\phi)),\ k\to\infty.
 $$

 The following estimations are inspired by Lemma 3.13 of \cite{rr}. 
 Assuming that $dQ/dP$ is bounded above, $E_Q(V(\phi))=0$ follows for each $\phi\in\mathcal{A}$
 by Lemma \ref{varhato}.  
 So for any $\phi\in\mathcal{A}'(u)$ we get, applying H\"older's inequality and Remark \ref{cista},
 \begin{eqnarray}\nonumber
 Eu^+(V(\phi)) &\leq& C_1+C_1 E[V^+(\phi)^{\alpha}]\leq C_1+C_2 (E_QV^+(\phi))^{\alpha}\\
\nonumber =C_1+C_2( E_Q V^-(\phi))^{\alpha} &\leq& C_1+C_3 (EV^-(\phi))^{\alpha}\\  
 &\leq&
\label{moon} C_4(Eu^-(V(\phi)))^{\alpha}+C_4
 \end{eqnarray}
with constants $C_2:=C_1[E(dP/dQ)^{\alpha/(1-\alpha)}]^{1-\alpha}$,
$C_3:=C_2(\mathrm{ess.}\sup dQ/dP)^{\alpha}$, 
$C_4:=\max\{ C_1+C_3C^{\alpha}/c^{\alpha}, C_3/c^{\alpha}\}$, 
where $c,C$ are from Lemma \ref{lena} and we used the fact that $(x+y)^{\alpha}\leq x^{\alpha}+y^{\alpha}$,
$x,y\geq 0$. Notice that $\alpha/(1-\alpha)<p_n$ so
one can choose $Q:=Q_n$ such that $C_2<\infty$ and also $dQ_n/dP\in L^{\infty}$ hence $C_3,C_4<\infty$ hold
as well. We can even assume that $E(dP/dQ)^{\theta/(1-\theta)}<\infty$ for some $\alpha<\theta<1$.

If we had $\sup_k E_Q ([V(\phi(k))]^+)=\infty$ then, by \eqref{moon},
(along a subsequence still denoted by $k$) $Eu^-(V(\phi(k)))\to\infty$, $k\to\infty$ would hold,
hence  
\begin{eqnarray*}
\nonumber Eu(V(\phi(k)))&=& Eu^+(V(\phi(k)))-Eu^-(V(\phi(k)))\\
\nonumber &\leq& C_4(Eu^-(V(\phi(k))))^{\alpha}+C_4-Eu^-(V(\phi(k))),
\end{eqnarray*}
which tends to $-\infty$ as $k\to\infty$, a contradiction. 
It follows that $\sup_k E_Q |V(\phi(k))|=2\sup_k E_Q ([V(\phi(k))]^+)<\infty$ and the Koml\'os
theorem implies the existence of convex combinations of $\phi(k)$ (still denoted by $\phi(k)$)
satsifying $V(\phi(k))\to X$ a.s. for some random variable $X$. By convexity and closedness of $K_1$
(see Lemma \ref{closed}) we get that $X=V(\phi^*)$ for some $\phi^*\in\mathcal{A}$. 

Since $(x+y)^{\theta/\alpha}\leq 2^{\theta/\alpha}(x^{\theta/\alpha}+y^{\theta/\alpha})$
for $x,y\geq 0$, it is clear that 
$$
[u^+(V(\phi(k))]^{\theta/\alpha}\leq [C_1(V^+(\phi(k))^{\alpha}+1)]^{\theta/\alpha}\leq C_5 V^+(\phi(k))^{\theta}+C_5,
$$
with $C_5:=2^{\theta/\alpha}C_1^{\theta/\alpha}$ and, just like in \eqref{moon}, we get
$$
E[u^+(V(\phi(k)))]^{\theta/\alpha}\leq C_6+C_6(E_QV^-(\phi(k)))^{\theta}\leq C_6+C_6(E_Q|V(\phi(k))|)^{\theta},
$$
for $C_6:=C_5 [E(dP/dQ)^{\theta/(1-\theta)}]^{1-\theta}$. 
Since the right-hand side was shown to be bounded in $k$ (and convex
combinations do not change this), the family $u^+(V(\phi(k)))$, $k\in\mathbb{N}$ is unformly
integrable and $\sup_{\phi\in\mathcal{A}'(u)}Eu(V(\phi))<\infty$. 
Combining uniform integrability for $u^+(V(\phi(k)))$ with the Fatou lemma for $u^-(V(\phi(k)))$, we obtain
$$
Eu(V(\phi^*))\geq \lim_{k\to\infty} Eu(V(\phi(k))),
$$
which shows $\phi^*\in\mathcal{A}'(u)$ as well as \eqref{aki}.

Now define $\kappa_n:=p_n/(p_n+1)-\epsilon$. 
Applying Corollary \ref{mobutu} with the function
$$
u_n(x):=\kappa_n x+1,\ x<0,\quad u_1(x):=(1+x)^{\kappa_n},\ x\geq 0
$$
provides $Q\in\mathcal{M}$ with $dQ/dP\in L^{\infty}$ and $dP/dQ\in L^{\pi_n}$
with $\pi_n:=2/(1-\kappa_n)$. As $\pi_n=\pi_n(\epsilon)\to 2p_n+2$ when $\epsilon\to 0$,
for $p_{n+1}:=2p_n+2$ and for any $\epsilon>0$ we can assert the existence of $Q_{n+1}=Q_{n+1}(\epsilon)\in\mathcal{M}$ 
with $dQ_{n+1}/dP\in L^{\infty}$ and $dP/dQ_{n+1}\in L^{p_{n+1}-\epsilon}$. 

Iterating the above procedure we
get the existence of an optimiser for $\alpha<p_n/(p_n+1)$, for all $n$. This shows the
statement of the theorem because  $p_n/(p_n+1)\to 1$, $n\to\infty$.
\end{proof}

\begin{remark}{\rm When $u$ is strictly concave, a standard argument shows that $\phi^*$ is unique.}
\end{remark}

\begin{corollary}\label{horhos}
  Let Assumption \ref{relevant} be in force and assume 
$\sum_{i=1}^{\infty}b_i^2<\infty$. For each $p\geq 1$ there exists $Q=Q(p)\in\mathcal{M}$
such that $dQ/dP\in L^{\infty}$ and $dP/dQ\in L^p$.
Furthermore, $dQ/dP$ can be chosen of the form
\begin{equation}\label{monarch}
\frac{dQ}{dP}=\frac{u'\left(\sum_{i=1}^{\infty} \phi^*_i (\varepsilon_i-b_i)\right)}
{Eu'\left(\sum_{i=1}^{\infty} \phi^*_i (\varepsilon_i-b_i)\right)}
\end{equation}
with $u:\mathbb{R}\to\mathbb{R}$ strictly 
increasing, concave and continuously differentiable with bounded $u'$.
\end{corollary}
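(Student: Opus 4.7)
The plan is to harvest the statement directly from the recursive construction inside the proof of Theorem \ref{bumbo}. That proof produces a sequence of numbers $p_n$ (with $p_1 = 2$ and $p_{n+1} = 2p_n + 2$) and, for every $\epsilon \in (0,1)$, measures $Q_n = Q_n(\epsilon) \in \mathcal{M}$ satisfying $dQ_n/dP \in L^\infty$ and $dP/dQ_n \in L^{p_n - \epsilon}$. Since $p_n \to \infty$, for a prescribed $p \geq 1$ I would pick $n$ large and $\epsilon \in (0,1)$ small so that $p_n - \epsilon \geq p$, and set $Q := Q_n(\epsilon)$. This immediately yields $dQ/dP \in L^\infty$ and $dP/dQ \in L^{p_n - \epsilon} \subset L^p$ (the inclusion is on a probability space).

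For the representation \eqref{monarch}, I would trace through how each $Q_{n+1}$ is manufactured in the induction step: Corollary \ref{mobutu} is applied to the function
\[
u_n(x) := \kappa_n x + 1 \text{ for } x < 0, \qquad u_n(x) := (1+x)^{\kappa_n} \text{ for } x \geq 0,
\]
where $\kappa_n := p_n/(p_n+1) - \epsilon \in (0,1)$. A direct computation gives $u_n'(x) = \kappa_n$ for $x \leq 0$ and $u_n'(x) = \kappa_n(1+x)^{\kappa_n - 1}$ for $x \geq 0$; these match at $x = 0$, so $u_n$ is continuously differentiable, strictly increasing (as $u_n' > 0$), concave (since $u_n'' \leq 0$), with $u_n'$ bounded by $\kappa_n$. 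Thus $u_n$ meets the hypotheses of Corollary \ref{mobutu}, which supplies the explicit form \eqref{monarch} with $\phi^*$ the optimizer for $u_n$. Existence of this optimizer inside $\mathcal{A}'(u_n)$ is in turn provided by Theorem \ref{bumbo} applied at level $\alpha = \kappa_n < 1$, since $u_n$ clearly satisfies \eqref{retes} with this exponent. The base case $n = 1$ is handled identically, starting instead from the bounded-above $u_1$ used at the beginning of the proof of Theorem \ref{bumbo}, to which Corollaries \ref{akutt} and \ref{mobutu} apply directly.

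I do not expect a genuine obstacle here: the corollary is essentially a repackaging of the byproducts of Theorem \ref{bumbo}'s induction. The only points to double-check are (i) that the specific $u_n$ simultaneously fulfils the regularity hypotheses of Corollary \ref{mobutu} and the growth bound of Theorem \ref{bumbo}, and (ii) that inverting the identity $1/u_n'(x) = (1+x)^{1-\kappa_n}/\kappa_n$ on $\{x \geq 0\}$ together with $V(\phi^*) \in L^2$ gives the advertised $L^{p_n - \epsilon}$ control on $dP/dQ$; both are immediate from the explicit form of $u_n$ and were already used implicitly inside the proof of Theorem \ref{bumbo}.
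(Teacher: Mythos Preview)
Your proposal is correct and follows exactly the paper's approach: the paper's own proof is the single sentence ``This was shown during the proof of Theorem \ref{bumbo} since $p_n\to\infty$, $n\to\infty$,'' and you have simply unpacked that reference by recounting the recursive construction of $Q_n(\epsilon)$ via Corollary~\ref{mobutu} applied to the explicit $u_n$. Your checks (i) and (ii) are precisely the verifications carried out inside the proof of Theorem~\ref{bumbo}, so nothing is missing.
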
 
\begin{proof}
This was shown during the proof of Theorem \ref{bumbo} since $p_n\to\infty$, $n\to\infty$. 
\end{proof}

\begin{remark} {\rm
In multiperiod discrete-time models with finitely many assets, if there is a risk-neutral
measure $Q\sim P$ then it can always be chosen to satisfy $dQ/dP\in L^{\infty}$. In continuous-time
models, however, finding such a $Q$ with bounded $P$-density is rather a rarity.

Also, in general, it is not easy to control the size of $dP/dQ$ either, see \cite{walter-rokhlin,rokhlin1}.
Corollary \ref{horhos} is a powerful result because it provides, in a model with countably many assets,
$Q\in\mathcal{M}$ with $dQ/dP,dP/dQ$ both satisfying strong conditions.
}
\end{remark}

\begin{remark} {\rm The standard route for attacking \eqref{force} is via its dual problem
where a convex functional is minimized over $\mathcal{M}$, see e.g. \cite{ks_exp,w}, and
the dual minimizer is given by formulas analogous to \eqref{monarch}, under appropriate conditions. 
Our approach works directly on the primal problem so
we do not need to introduce the dual setting. It has to be admitted though that our proof of
Theorem \ref{bumbo} heavily exploits duality in an indirect way: relative compactness of
a maximizer sequence is proved using some $Q\in\mathcal{M}$ with suitable integrability
properties.}
\end{remark}

\section{Almost optimal strategies}\label{no}

Theorem \ref{bumbo} establishes that there is $\phi^*\in\mathcal{A}'(u)$ attaining 
$\sup_{\phi\in\mathcal{A}'(u)}Eu(V(\phi))$.
In practical situations, however, only investments into finite market segments are feasible. It is hence an intriguing
question whether the value of the problem formulated over $\mathcal{E}$ equals that of the problem over
the whole domain, i.e. whether  
\begin{equation}\label{odo}
\sup_{\phi\in\mathcal{A}'(u)}Eu(V(\phi))=\sup_{\phi\in\mathcal{E}'(u)}Eu(V(\phi)),
\end{equation}
where $\mathcal{E}'(u):=\{\phi\in\mathcal{E}:\, Eu^-(V(\phi))<\infty\}$.

This problem seems hard, it is somewhat similar to the general question of whether, in a continuous
semimartingale model, the value of the utility maximization problem is attainable along
a sequence of ``simple'' strategies, where being simple may depend on the context. 

We report a partial result below. Throughout this section, $u:\mathbb{R}\to\mathbb{R}$ is a concave, nondecreasing function with $u(0)=0$.
Define $\Phi(x):=-u(-x)$, $x\geq 0$.

\begin{assumption}\label{moderate} Let 
\begin{equation}\label{ququ}
\lim_{x\to \infty} \Phi(x)/x=\infty,
\end{equation} 
\begin{equation}\label{huhu}
\limsup_{x\to\infty} \Phi(2x)/\Phi(x)<\infty
\end{equation}
hold. 
Define the conjugate function $\Psi(y):=\sup_{x\geq 0}\{xy-\Phi(x)\}$ and assume that
\begin{equation}\label{pupu}
\limsup_{y\to \infty}\Psi(2y)/\Psi(y)<\infty.
\end{equation}
\end{assumption}

\begin{remark} {\rm The above assumption means that $\Phi$, $\Psi$ are \emph{Young functions} (for $\Phi$ this
is stipulated by \eqref{ququ} and then it follows for $\Psi$ automatically)  
and both $\Phi$, $\Psi$ are \emph{of class $\Delta_2$} (this is the content of 
\eqref{huhu} and \eqref{pupu}), see \cite{neveu} and \cite{rrr} for definitions and details. We say that $u$ is \emph{moderate} if it satisfies
\eqref{ququ} and \eqref{huhu}.

A moderate utility function $u$ behaves in a ``power-like'' way near $-\infty$. Exponential $u$ (i.e. $u(x)=-e^{-x}$) is
a typical example of a utility function that is not moderate. Being moderate is, admittedly, a restrictive 
condition on $u$ but it still allows a large class of interesting cases. 
Condition \eqref{pupu} on $\Psi$ is rather mild, it is implied by the
standard ``reasonable elasticity condition'', see Corollary 4.2 of \cite{w}.}
\end{remark}

\begin{theorem}\label{koves} 
Let $u:\mathbb{R}\to\mathbb{R}$ be concave and nondecreasing with $u(0)=0$ such that Assumption \ref{moderate} is in force.
Then \eqref{odo} holds. Under Assumption \ref{relevant}, \eqref{manyineni} and \eqref{retes}, an optimal strategy $\phi^*$ exists
and 
$$
Eu(V(\phi^*))=\lim_{n\to\infty}Eu(V(\bar{\phi}(n))),
$$
where
\[
\bar{\phi}_j(n)=\phi^*_j,\ 1\leq j\leq n,\quad \bar{\phi}_j(n)=0,\ j>n,
\]
for each $n\geq 1$.
\end{theorem}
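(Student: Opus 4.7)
The plan is to establish \eqref{odo} by showing, for each $\phi\in\mathcal{A}'(u)$, that the truncations $\bar\phi(n)$ belong to $\mathcal{E}'(u)$ and $Eu(V(\bar\phi(n)))\to Eu(V(\phi))$; combined with the obvious inclusion $\mathcal{E}'(u)\subset\mathcal{A}'(u)$ this yields the equality of suprema, and the second statement follows at once by applying the same approximation to the optimizer $\phi^*\in\mathcal{A}'(u)$ furnished by Theorem~\ref{bumbo}. Under Assumption~\ref{moderate}, $\Phi(x)=-u(-x)$ is a Young function with both $\Phi$ and its conjugate $\Psi$ in $\Delta_2$, and the identity $u^-(x)=\Phi(x^-)$ rewrites $\phi\in\mathcal{A}'(u)$ as $V^-(\phi)\in L^\Phi$.

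The positive part is handled easily: concavity, monotonicity and $u(0)=0$ yield the linear bound $u^+(x)\le u(1)(1+x^+)$, so the $L^2$-convergence $V(\bar\phi(n))\to V(\phi)$ (from \eqref{manyineni}) immediately gives $Eu^+(V(\bar\phi(n)))\to Eu^+(V(\phi))$. The negative part requires a martingale argument under $P$. Set $\tilde S_n:=\sum_{i=1}^n\phi_i\varepsilon_i=V(\bar\phi(n))+c_n$ with $c_n:=\sum_{i\le n}\phi_i b_i$. By Cauchy-Schwarz with $\phi,b\in\ell_2$, one has $c_n\to c:=\sum_i\phi_i b_i$, and $\tilde S_n$ is a centered $P$-martingale with respect to $\mathcal{F}_n:=\sigma(\varepsilon_1,\dots,\varepsilon_n)$ converging almost surely and in $L^2$ to $\tilde S_\infty:=V(\phi)+c$. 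Crucially, only the negative tail needs to lie in $L^\Phi$: since $\tilde S_\infty^-\le V^-(\phi)+|c|$, the $\Delta_2$ estimate $\Phi(a+b)\le K_\Delta[\Phi(a)+\Phi(b)]$ gives $\Phi(\tilde S_\infty^-)\in L^1$. Applying Jensen first to the convex map $x\mapsto x^-$ and then to $\Phi$,
\[
\Phi(\tilde S_n^-) \le \Phi\bigl(E[\tilde S_\infty^- \mid \mathcal{F}_n]\bigr) \le E\bigl[\Phi(\tilde S_\infty^-) \mid \mathcal{F}_n\bigr],
\]
with the right-hand side a uniformly integrable martingale. Thus $\{\Phi(\tilde S_n^-)\}_n$ is uniformly integrable, and the pointwise bound $V^-(\bar\phi(n))\le|c_n|+\tilde S_n^-$ (with $|c_n|$ bounded) combined with $\Delta_2$ transfers this uniform integrability to $\{\Phi(V^-(\bar\phi(n)))\}_n$. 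In particular $\bar\phi(n)\in\mathcal{E}'(u)$ for every $n$; combined with the a.s.\ convergence $V^-(\bar\phi(n))\to V^-(\phi)$ (Kolmogorov), Vitali's theorem delivers $Eu^-(V(\bar\phi(n)))\to Eu^-(V(\phi))$, completing the proof of $Eu(V(\bar\phi(n)))\to Eu(V(\phi))$.

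The main obstacle I expect is the asymmetric role of the positive and negative parts: without any growth assumption on $u$ at $+\infty$, $V(\phi)$ need not lie in $L^\Phi$ (only $V^-(\phi)$ does), so a direct $L^\Phi$-martingale convergence for the full sequence $\tilde S_n$ is unavailable. What rescues the argument is that $\phi\in\mathcal{A}'(u)$ supplies exactly the one-sided $L^\Phi$ bound needed to dominate the negative excursions of the partial sums via the closed-martingale inequality above, and the linear growth of $u^+$ then takes care of the positive side without any Orlicz structure.
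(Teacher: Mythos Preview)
Your argument is correct and, in fact, somewhat sharper than the paper's. Both proofs center the partial sums to obtain a $P$-martingale and then control the negative excursions in the Orlicz space $L^\Phi$; the difference lies in how this control is obtained. The paper invokes an Orlicz-space Doob maximal inequality (Neveu, Proposition~A-3-4) to conclude $E\Phi(\sup_n Z_n)<\infty$ and then applies dominated convergence; that maximal inequality is precisely where the $\Delta_2$ condition \eqref{pupu} on the conjugate $\Psi$ enters. Your route replaces this by the elementary closed-martingale bound $\Phi(\tilde S_n^-)\le E[\Phi(\tilde S_\infty^-)\mid\mathcal{F}_n]$ (two applications of conditional Jensen), which yields uniform integrability directly and uses only \eqref{huhu}; the hypothesis \eqref{pupu} is never invoked. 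So your proof actually establishes the theorem under a strictly weaker version of Assumption~\ref{moderate}. A second minor difference: for the positive part the paper is content with Fatou's lemma (giving $Eu(V(\phi))\le\liminf_n Eu(V(\bar\phi(n)))$, which suffices for \eqref{odo} and for the second assertion once combined with optimality of $\phi^*$), whereas you obtain genuine convergence of $Eu^+$ via the linear bound $u^+(x)\le u(1)(1+x^+)$ and $L^2$-uniform integrability of $V^+(\bar\phi(n))$. Your closing paragraph correctly identifies the asymmetry that makes the problem nontrivial; the paper's use of the maximal inequality is one way to resolve it, and your Jensen domination is another.
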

\begin{proof}
Fix $\phi\in\mathcal{A}'(u)$. The process
$$
Y_t:=-\sum_{j=1}^{\infty}\phi_jb_j+ \sum_{j=1}^t \phi_j\varepsilon_j,\ t\in\mathbb{N}\cup\{\infty\}
$$
is a (convergent) martingale with respect to the filtration 
$$
\mathcal{F}_0:=\{\emptyset,\Omega\},\ \mathcal{F}_t:=\sigma(\varepsilon_1,\ldots,\varepsilon_t),\ \mathcal{F}_{\infty}:=\sigma(\varepsilon_j,\, j\in\mathbb{N}),
$$
so $Z_t:=Y_t^-$ is a submartingale.
(Note that the parameter $t$ here has no interpretation as ``time''.) 

Defining $\Phi(x):=-u(-x)$, $x\geq 0$, $\phi\in\mathcal{A}'(u)$ entails
$E\Phi(Z_{\infty})<\infty$. 
Since \eqref{ququ} and \eqref{pupu} hold, this implies 
\begin{equation*}
E\Phi(\delta\sup_n Z_n)<\infty
\end{equation*}
for some $\delta>0$, see Proposition A-3-4 of \cite{neveu} and the ensuing discussion. (In fact, that result
is stated for martingales there but the proof works for non-negative submartingales in the same manner.)

Using \eqref{huhu}, we easily deduce
\begin{equation}\label{tutt}
E\Phi(\Delta\sup_n Z_n)<\infty
\end{equation}
for every $\Delta>0$ as well. Define
\[
{\phi}_j(n)=\phi_j,\ 1\leq j\leq n,\quad {\phi}_j(n)=0,\ j>n.
\]
Note that $V^-(\phi(n))\leq Z_n+\sum_{j=1}^{\infty}\vert \phi_j b_j\vert\leq Z_n + Q$ where
$Q:=\Vert\phi\Vert_{\ell_2}\Vert b\Vert_{\ell_2}$.
From \eqref{tutt} we infer that
\begin{equation}\label{tu}
E\Phi(\sup_n V^-(\phi(n)))\leq \frac{1}{2}\left(E\Phi(2\sup_n Z_n)+\Phi(2Q)\right)<\infty,
\end{equation}
by convexity of $\Phi$. In particular, $\phi(n)\in\mathcal{E}'(u)$, $n\in\mathbb{N}$.

Dominated convergence implies $Eu^-(V(\phi(n)))\to Eu^-(V(\phi))$
by \eqref{tu},
while Fatou's lemma implies $Eu^+(V(\phi))\leq \liminf_{n\to\infty} Eu^+(V(\phi(n)))$ so
\[
\sup_{\phi\in\mathcal{A}'(u)}Eu(V(\phi))\leq\sup_{\phi\in\mathcal{E}'(u)}Eu(V(\phi))
\]
follows. The other inequality being trivial, we arrive at \eqref{odo}.

Under Assumption \ref{relevant}, \eqref{manyineni} and \eqref{retes} there is
an optimal $\phi^*$ by Theorem \ref{bumbo}. Applying the above argument to $\phi^*=\phi$
we get the second statement of the theorem, too.
\end{proof}

\noindent\textbf{Acknowledgments.} Support from the
``Lend\"ulet'' Grant LP2015-6 of the Hungarian Academy of Sciences is gratefully acknowledged.
The idea of this paper was conceived during a research visit in 2015 at Dublin City University; I 
thank Paolo Guasoni for his invitation. I am also grateful Josef Teichmann for his kind invitation
to ETH, Z\"urich in 2014: it was discussions with him that renewed my interest in the models
treated here. Finally, I sincerely thank the anonymous referees for their very helpful comments.

\end{document}